\numberwithin{equation}{section}
\begin{document}

\newtheorem{theorem}{Theorem}[section]
\newtheorem{lemma}[theorem]{Lemma}
\newtheorem{define}[theorem]{Definition}
\newtheorem{remark}[theorem]{Remark}
\newtheorem{corollary}[theorem]{Corollary}
\newtheorem{example}[theorem]{Example}
\newtheorem{assumption}[theorem]{Assumption}
\newtheorem{proposition}[theorem]{Proposition}
\newtheorem{conjecture}[theorem]{Conjecture}

\def\Ref#1{Ref.~\cite{#1}}

\def\Rnum{{\mathbb R}}
\def\const{\text{const.}}

\def\smallbinom#1#2{{\textstyle \binom{#1}{#2}}}

\def\pr{{\rm pr}}
\def\rk{{\rm rank}}
\def\spn{{\rm span}}
\def\dom{{\rm dom}\,}
\def\ran{{\rm ran}\,}
\def\hook{\rfloor\,}

\def\X{\mathbf{X}}
\def\Y{\mathbf{Y}}
\def\evw{\boldsymbol\omega}
\def\w{\boldsymbol\varpi}
\def\d{\mathrm{d}}
\def\id{\mathrm{id}}
\def\lieder#1{{\mathcal L}_{#1}}
\def\ad{\mathrm{ad}}
\def\t{\mathrm{t}}

\def\Jsp{\mathrm{J}}
\def\Esp{\mathcal{E}}
\def\symmsp{\mathrm{Symm}}
\def\adjsymmsp{\mathrm{AdjSymm}}

\def\Rop{{\mathcal R}}
\def\Dop{{\mathcal D}}
\def\Hop{{\mathcal H}}
\def\Jop{{\mathcal J}}
\def\lapl{\boldsymbol\Delta}
\def\grad{\boldsymbol\nabla}

\def\smallbinom#1#2{{\textstyle \binom{#1}{#2}}}

\def\const{\text{const.}}

\tolerance=50000
\allowdisplaybreaks[3]

\title{Geometrical formulation for adjoint-symmetries\\ of partial differential equations}

\author{
Stephen C. Anco${}^1$
\lowercase{\scshape{and}}
Bao Wang${}^2$
\\\\\scshape{
D\lowercase{\scshape{epartment}} \lowercase{\scshape{of}} M\lowercase{\scshape{athematics and}} S\lowercase{\scshape{tatistics}}\\
B\lowercase{\scshape{rock}} U\lowercase{\scshape{niversity}}\\
S\lowercase{\scshape{t.}} C\lowercase{\scshape{atharines}}, ON L2S3A1, C\lowercase{\scshape{anada}}
}}

\thanks{${}^1$sanco@brocku.ca, ${}^2$wangbao@lsec.cc.ac.cn}

\begin{abstract}
A geometrical formulation for adjoint-symmetries as 1-forms 
is studied for general partial differential equations (PDEs), 
which provides a dual counterpart of the geometrical meaning of symmetries
as tangent vector fields on the solution space of a PDE.
Two applications of this formulation are presented. 
Additionally,
for systems of evolution equations, 
adjoint-symmetries are shown to have another geometrical formulation
given by 1-forms that are invariant under the flow generated by the system on the solution space.
This result is generalized to systems of evolution equations with spatial constraints,
where adjoint-symmetry 1-forms are shown to be invariant up to a functional multiplier of a normal 1-form associated to the constraint equations.
All of the results are applicable to the PDE systems of interest 
in applied mathematics and mathematical physics. 
\end{abstract}

\maketitle

\section{Introduction}

Symmetries are a fundamental coordinate-free structure of
a partial differential equation (PDE). 
In geometrical terms,
an infinitesimal symmetry is an evolutionary (vertical) vector field
that is tangent to the solution space of a PDE,
where the components of the vector field are the solutions of
the linearization of the PDE on its solution space.
(See, e.g. \Ref{Ovs-book,Olv-book,KraVin-book,BCA-book}.)

Knowledge of the symmetries of a PDE can be used to
map given solutions into other solutions, 
find invariant solutions, 
detect and find mappings into a target class of PDEs,
detect integrability,
and find conservation laws through Noether's theorem
when a PDE has a variational (Lagrangian) structure.

Solutions of the adjoint linearization of a PDE on its solution space
are known as adjoint-symmetries.
This terminology was first introduced and explored for ordinary differential equations (ODEs) in \Ref{SarCanCra,SarBon,Sar,BA-book},
and then generalized to PDEs in \Ref{AncBlu1997,AncBlu2002b}.
(See \Ref{Anc-review} for a recent overview for PDEs.)
When a PDE lacks a variation structure, 
then its adjoint-symmetries will differ from its symmetries. 

Knowledge of the adjoint-symmetries of a PDE can be used for several purposes just as symmetries can. 
Specifically, 
solutions of the PDE can be found
analogously to the invariant surface condition associated to a symmetry;
mappings into a target class of PDEs can be detected and found
analogously to characterizing the symmetry structure of the target class;
integrability can be detected
analogously to existence of higher-order symmetries;
and conservation laws can be determined
analogously to symmetries that satisfy a variational condition. 
In particular,
the counterpart of variational symmetries for a general PDE
is provided by multipliers,
which are well known to be adjoint-symmetries that satisfy an Euler-Lagrange condition.

However, a simple geometrical meaning (apart from abstract formulations) for adjoint-symmetries
has yet to be developed in general for PDEs. 
Several significant new steps toward this goal will be taken in the present paper. 

Firstly,
for general PDE systems, 
adjoint-symmetries will be shown to correspond to evolutionary (vertical) 1-forms that functionally vanish on the solution space of the system.
This formulation has two interesting applications.
It will provide a geometrical derivation of a well-known formula that generates
a conservation law from a pair consisting of a symmetry and an adjoint-symmetry \cite{AncBlu1997,Anc2017}.
It also will yield three different actions of symmetries on adjoint-symmetries
from Cartan's formula for the Lie derivative, 
providing a geometrical formulation of some recent work that
used an algebraic viewpoint \cite{AncWan2020b}.

Secondly, 
for evolution systems, 
these adjoint-symmetry 1-forms will be shown to have the structure of a Lie derivative of a simpler underlying 1-form,
utilizing the flow generated by the system.
As a result,
adjoint-symmetries of evolution systems will geometrically correspond to 1-forms
that are invariant under the flow on the solution space of the system.
This directly generalizes the geometrical meaning of adjoint-symmetries known for ODEs \cite{Sar}.

Thirdly,
a bridge between the preceding results for general PDE systems and evolution systems
will be developed by considering evolution systems with spatial constraints.
These systems are ubiquitous in applied mathematics and mathematical physics,
for example:
Maxwell's equations,
incompressible fluid equations,
magnetohydrodynamical equations,
and Einstein's equations. 
For such systems,
invariance of the adjoint-symmetry 1-form under the constrained flow 
will be shown to hold up to a functional multiple of the normal 1-form associated to the constraint equations.

Throughout, the approach will be concrete, rather than abstract,
so that the results can be readily understood and applied to specific PDE systems of interest in applied mathematics and mathematical physics. 

The rest of the paper is organized as follows.
Section~\ref{sec:prelims} discusses
the evolutionary form of vector fields and its counterpart for 1-forms
in the mathematical framework of calculus in jet space,
which will underlie all of the main results. 
Section~\ref{sec:geomform} reviews the geometrical formulation of symmetries
and presents the counterpart geometrical formulation of adjoint-symmetries. 
In addition, some examples of adjoint-symmetries of physically interesting PDE systems 
are discussed. 
Section~\ref{sec:applications} gives the two applications of adjoint-symmetry 1-forms.
Section~\ref{sec:evolpde} develops the main results for adjoint-symmetries of evolution systems and extends these results to constrained evolution systems.
Some concluding remarks are made in Section~\ref{sec:remarks}.

\section{Vector fields, 1-form fields, and their evolutionary form}\label{sec:prelims}

To begin, some essential tools \cite{Olv-book,Nes,Anc-review}
from calculus in jet space will be reviewed. 
This will set the stage for a discussion of the evolutionary form of 
vector fields and its counterpart for 1-forms, 
as needed for the main results in the subsequent sections. 

Independent variables are denoted $x^i$, $i=1,\ldots,n$, 
and dependent variables are denoted $u^\alpha$, $\alpha=1,\ldots,m$. 
Derivative variables are indicated by subscripts employing a multi-index notation:
$I=\{i_1,\ldots,i_N\}$, $u^\alpha_I = u^\alpha_{i_1\cdots i_N} := \partial_{x^{i_1}}\cdots\partial_{x^{i_N}} u^\alpha$, $|I|=N$;
$I=\emptyset$, $u^\alpha_I := u^\alpha$, $|I|=0$. 
Some useful notation: 
$\partial^k u$ will denote the set $\{u^\alpha_I\}_{|I|=k}$ of all derivative variables of order $k\geq 0$; 
$u^{(k)}$ will denote the set $\{u^\alpha_I\}_{0\leq |I|\leq k}$ of all derivative variables of all orders up to $k\geq 0$. 
The summation convention of summing over any repeated (multi-) index in an expression is used throughout. 

Jet space is the coordinate space $\Jsp=(x^i,u^\alpha,u^\alpha_j,\ldots)$. 
A smooth function $u^\alpha=\phi^\alpha(x) :\Rnum^n\to \Rnum^m$
determines a point in $\Jsp$:
at any $x^i=(x_0)^i$, 
the values $(u_0)^\alpha:=\phi^\alpha(x_0)$ 
and the derivative values $(u_0)^\alpha_J:=\partial_{j_1}\cdots\partial_{j_N}\phi^\alpha(x_0)$ for all orders $N\geq 1$ 
give a map 
\begin{equation}\label{eval}
u^\alpha=\phi^\alpha(x)\stackrel{x_0}\to ((x_0)^i,(u_0)^\alpha,(u_0)^\alpha_j,\ldots) \in \Jsp . 
\end{equation}

In jet space, the primitive geometric objects consist of 
partial derivatives $\partial_{x^i}$, $\partial_{u^\alpha_J}$, 
and differentials $\d x^i$, $\d u^\alpha_J$. 
They are related by duality (hooking) relations:
\begin{align}
& \partial_{x^i} \hook \d x^j = \delta_i^j , 
\\
& \partial_{u^\alpha_I} \hook \d u^\beta_J = \delta_\alpha^\beta \delta_J^I . 
\end{align}
It will be useful to also introduce the geometric contact 1-forms 
\begin{equation}\label{contact1forms}
\Theta^\alpha_I = \d u^\alpha_I - u^\alpha_{Ii} \d x^i . 
\end{equation}
Under the evaluation map \eqref{eval}, 
the pull back of a contact 1-form vanishes. 

Total derivatives are given by $D_i= \partial_{x^i} + u^\alpha_{iJ}\partial_{u^\alpha_J}$,
which corresponds to the chain rule under the evaluation map \eqref{eval}. 
Higher total derivatives are defined by 
$D_J = D_{j_1}\cdots D_{j_N}$, $J=\{j_1,\ldots,j_N\}$, $|J|=N$. 
For $J=\emptyset$, $D_\emptyset = \id$ is the identity operator,
where $|\emptyset|=0$.
In particular, 
$D_J u^\alpha = u^\alpha_J$, and $D_J \d u^\alpha = \d u^\alpha_J$. 

A differential function is a function $f(x,u^{(k)})$ defined on a finite jet space 
$\Jsp^{(k)}=(x^i,u^\alpha,u^\alpha_j,\ldots,u^\alpha_{j_1\cdots j_k})$ of order $k\geq 0$. 
The Frechet derivative of a differential function $f$ is given by
\begin{align}
f' = f_{u^\alpha_I}D_I
\end{align}
which acts on (differential) functions $F^\alpha$. 
The adjoint-Frechet derivative of a differential function $f$ is given by
\begin{align}
(f'{}^*)_\alpha = (-1)^{|I|} D_I f_{u^\alpha_I}
\end{align}
which acts on (differential) functions $F$, 
where the righthand side is viewed as a composition of operators. 

The Frechet second-derivative is given by
\begin{align}
f''(F_1,F_2) = f_{u^\alpha_I u^\beta_J}(D_I F_1^\alpha)(D_J F_2^\beta) . 
\end{align}
This expression is symmetric in the pair of functions $(F_1^\alpha,F_2^\alpha)$.

The commutator of two differential functions $f_1$ and $f_2$ is given by $[f_1,f_2] = f_2'(f_1)- f_1'(f_2)$. 

The Euler operator (variational derivative) is given by 
\begin{equation}\label{Euler.op}
 E_{u^\alpha} = (-1)^{|I|} D_I\partial_{u^\alpha_I} . 
\end{equation}  
It characterizes total divergence expressions: 
$E_{u^\alpha}(f) =0$ holds identically iff $f= D_i F^i$
for some differential vector function $F^i(x,u^{(k)})$.
The product rule takes the form
\begin{equation}
E_{u^\alpha}(f_1f_2) = f_1'{}^*(f_2)_\alpha + f_2'{}^*(f_1)_\alpha . 
\end{equation}

The higher Euler operators 
\begin{equation}\label{higher.Euler.op}
E_{u^\alpha}^{I} = \smallbinom{I}{J} (-1)^{|J|} D_J\partial_{u^\alpha_{IJ}}
\end{equation}    
characterize higher-order total derivative expressions:
$E_{u^\alpha}^{I}(f) = 0$ holds identically iff $f= D_{i_1}\cdots D_{i_{|I|}} F^{i_1\ldots i_{|I|}}$
for some differential tensor function $F^{i_1\ldots i_{|I|}}(x,u^{(k)})$. 

The Frechet derivative is related to the Euler operator by 
\begin{equation}
f'(F) = F^\alpha E_{u^\alpha}(f) + D_i\Gamma^i(F;f),
\quad
\Gamma^i(F;f)= (D_J F^\alpha) E_{u^\alpha_{iJ}}(f) . 
\end{equation}  
The Frechet derivative and its adjoint are related by 
\begin{equation}
F_2 f'(F_1) - F_1^\alpha f'{}^*(F_2)_\alpha 
= D_i\Psi^i(F_1,F_2;f),
\quad
\Psi^i(F_1,F_2;f) = (D_K F_2) (D_J F_1^\alpha)E_{u^\alpha_{iJ}}^{K}(f) . 
\end{equation}

\subsection{Evolutionary vector fields and 1-form fields}

A \emph{vector field in jet space} is defined as the geometric object 
\begin{equation}
P^i\partial_{x^i} + P^\alpha_I \partial_{u^\alpha_I}
\end{equation}
whose components are differential functions. 
Similarly, a \emph{1-form field in jet space} is defined as the geometric object
\begin{equation}
Q_i \d x^i + Q_\alpha^I \d u^\alpha_I
\end{equation}
whose components are differential functions. 
Total derivatives $D_i = \partial_{x^i} + u^\alpha_{iI} \partial_{u^\alpha_I}$
represent trivial vector fields that annihilate contact 1-forms: 
$D_i\hook\Theta^\alpha_J = 0$.

Geometric counterparts of partial derivatives $\partial_{u^\alpha_J}$ are 
evolutionary (vertical) differentials $\d u^\alpha_J$, 
where $\d$ is the evolutionary version of $d$: $\d^2=0$, $\d x^i=0$. 
They satisfy the duality (hooking) relation:
\begin{align}
\partial_{u^\alpha_I} \hook \d u^\beta_J 
= \delta_\alpha^\beta \delta_J^I . 
\end{align}

An \emph{evolutionary (vertical) vector field} is the geometric object 
\begin{equation}\label{vert.vectorfield}
P^\alpha_I \partial_{u^\alpha_I}
\end{equation}
whose components are differential functions. 
Every vector field $\X = P^i\partial_{x^i} + P^\alpha_I \partial_{u^\alpha_I}$
has a unique evolutionary form 
$\hat\X = \X - P^i D_i = \hat P^\alpha_I \partial_{u^\alpha_I}$
given by the components $\hat P^\alpha_I = P^\alpha_I -P^i u^\alpha_{iI}$. 
Its dual counterpart is an \emph{evolutionary (vertical) 1-form field}
\begin{equation}\label{vert.1formfield}
Q_\alpha^I \d u^\alpha_I
\end{equation}
whose components are differential functions. 

For later developments, 
it will be useful to define the \emph{functional pairing relation}
\begin{align}\label{pairing}
\langle P^\alpha_I \partial_{u^\alpha_I}, Q_\alpha^I \d u^\alpha_I \rangle
= \int P^\alpha_I Q_\alpha^I\, dx
\end{align}
between evolutionary vector fields and evolutionary 1-form fields. 
In local form, this pairing is given by the expression
\begin{align}\label{pairing.local}
P^\alpha_I Q_\alpha^I \text{ mod total } D . 
\end{align}

Two evolutionary 1-forms will be considered \emph{functionally equivalent} 
iff their pairings with an arbitrary evolutionary vector field agree:
\begin{equation}
\langle P^\alpha_I \partial_{u^\alpha_I}, Q_1{}_\beta^J \d u^\beta_J \rangle
= \langle P^\alpha_I \partial_{u^\alpha_I}, Q_2{}_\beta^J \d u^\beta_J \rangle , 
\end{equation}
or in local form 
\begin{align}
P^\alpha_I (Q_1{}_\alpha^I -Q_2{}_\alpha^I) =0 \text{ mod total } D . 
\end{align}

Functional equivalence of 1-forms is closely related to the notion of
functional 1-forms in the variational bi-complex.
See \Ref{Olv-book} for details.

\section{Geometric formulation of symmetries and adjoint-symmetries}\label{sec:geomform}

Consider a general PDE system of order $N$ consisting of $M$ equations
\begin{equation}\label{pde}
G^A(x,u^{(N)}) =0,
\quad
A=1,\ldots,M
\end{equation}
where $x^i$, $i=1,\ldots,n$, are the independent variables,
and $u^\alpha$, $\alpha=1,\ldots,m$, are the dependent variables.
The space of formal solutions $u^\alpha(x)$ of the PDE system will be denoted $\Esp$.

There are many equivalent starting points for the formulation of infinitesimal symmetries. 
For the present purpose, 
the most useful one is given by evolutionary vector fields 
and utilizes only the Frechet derivative. 
A \emph{symmetry} is a vector field 
\begin{equation}\label{P.vector}
\X_P = P^\alpha(x,u^{(k)})\partial_{u^\alpha}
\end{equation}
whose component functions $P^\alpha(x,u^{(k)})$ are non-singular on $\Esp$ 
and satisfy the linearization of the PDE system on $\Esp$:
\begin{equation}\label{symm.deteqn}
(\pr \X_P G^A)|_\Esp = G'(P)^A|_\Esp =0 . 
\end{equation}
This is the \emph{symmetry determining equation}, 
and the functions $P^\alpha$ are called the characteristic of the symmetry.

In this setting, 
an \emph{adjoint-symmetry} consists of functions $Q_A(x,u^{(l)})$ 
that are non-singular on $\Esp$ and that satisfy the adjoint linearization of the PDE system on $\Esp$:
\begin{equation}\label{adjsymm.deteqn}
G'{}^*(Q)_\alpha|_\Esp =0 .
\end{equation}
This is the \emph{adjoint-symmetry determining equation}. 

In particular, 
the two determining equations \eqref{symm.deteqn} and \eqref{adjsymm.deteqn} 
are formal adjoints of each other. 
They coincide only in two cases: 
either $G'=G'{}^*$, 
which is the necessary and sufficient condition for a PDE system to be an Euler-Lagrange equation 
(namely, possess a variational structure)
\cite{Olv-book,BCA-book,Anc-review}; 
or $G'=-G'{}^*$, 
which is the necessary and sufficient condition for a PDE system to be a linear, constant-coefficient system of odd order \cite{AncBlu2002b}.

Since $P^\alpha$ has the geometrical status as the components of the vector field \eqref{P.vector}, 
a natural question is whether $Q_A$ has any status given by 
the components of some other geometrical object \cite{Anc2017,Anc-review}. 

It will be useful to work with a coordinate-free description of the PDE system \eqref{pde} in jet space. 
Such a system of equations $(G^1(x,u^{(N)}),\ldots,G^M(x,u^{(N)})) =0$ 
describes a set of $M$ surfaces in the finite space 
$\Jsp^{(N)}(x,u,\partial u,\ldots,\partial^N u)$. 
Total derivatives of these equations, 
$(D_I G^1(x,u^{(N)}),\ldots, D_I G^M(x,u^{(N)})) =0$, 
correspondingly describe sets of surfaces in the higher-derivative finite spaces
$\Jsp^{(N+|I|)}(x,u,\partial u,\ldots,\partial^{N+|I|} u)$. 
Altogether, the set comprised by the equations and the derivative equations for all orders $|I|\geq 0$
corresponds to an infinite set of surfaces in jet space,
which can be identified with the solution space $\Esp$. 

As is well known, symmetry vector fields geometrically 
describe tangent vector fields with respect to $\Esp$. 
To see this explicitly, 
first consider the identities
\begin{gather}
\d G^A = (G^A)_{u^\alpha_I} \d u^\alpha_I , 
\label{nor.1form}
\\
G'(P)^A = \pr\X_P G^A = \pr\X_P\hook \d G^A . 
\label{Frechet.hook.rel}
\end{gather}
Now observe that $\d G^A$ is the normal 1-form to the surfaces $G^A=0$. 
The symmetry determining equation \eqref{symm.deteqn} then shows that 
the prolonged vector field $\pr\X_P$ is annihilated by the normal 1-form
and hence is tangent to these surfaces 
iff $\X_P$ is a symmetry of the PDE system. 

This normal 1-form \eqref{nor.1form} provides a natural way to associate 
a 1-form to an adjoint-symmetry via 
\begin{equation}\label{Q.1form}
\w_Q = Q_A(x,u^{(l)}) \d G^A . 
\end{equation}
A functionally equivalent 1-form is obtained through integration by parts:
\begin{equation}
Q_A \d G^A  = Q_A (G^A)'(\d u) = G'{}^*(Q)_\alpha \d u^\alpha \text{ mod total } D . 
\end{equation}
Evaluating this 1-form on the solution space $\Esp$ then gives
\begin{equation}
\w_Q|_\Esp = 0 \text{ mod total } D . 
\end{equation}
Thus, a 1-form $\w_Q$ functionally vanishes on the surfaces $\Esp$ 
iff its components $Q_A$ are an adjoint-symmetry. 

This establishes a main geometrical result. 

\begin{theorem}\label{thm:adjsymm}
Adjoint-symmetries describe evolutionary 1-forms $Q_A\d G^A$ 
that functionally vanish on the solution space $\Esp$ of a PDE system \eqref{pde}. 
\end{theorem}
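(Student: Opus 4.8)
The plan is to prove the stated correspondence as a biconditional: the functions $Q_A$ form an adjoint-symmetry exactly when the evolutionary 1-form $\w_Q = Q_A\,\d G^A$ functionally vanishes on $\Esp$. The strategy is to rewrite $\w_Q$ through the Frechet derivative of the $G^A$, then integrate by parts so as to move all total derivatives off the differentials $\d u$ and onto the coefficients $Q_A$, reducing $\w_Q$ modulo a total derivative to a single order-zero 1-form whose coefficient is exactly the adjoint-Frechet derivative $G'{}^*(Q)_\alpha$.

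First I would invoke the identity \eqref{nor.1form}, $\d G^A = (G^A)_{u^\alpha_I}\,\d u^\alpha_I$. Because $\d u^\alpha_I = D_I\,\d u^\alpha$, the right-hand side is precisely the Frechet derivative of $G^A$ evaluated on the evolutionary differential $\d u$, so that $\d G^A = (G^A)'(\d u)$ and hence $\w_Q = Q_A\,(G^A)'(\d u)$. I would then apply the stated Frechet-adjoint relation (the identity producing the divergence $D_i\Psi^i$) with the substitutions $F_1 = \d u$ and $F_2 = Q_A$, summed over $A$. This shows that $Q_A\,(G^A)'(\d u) - G'{}^*(Q)_\alpha\,\d u^\alpha$ is a total divergence, whence $\w_Q = G'{}^*(Q)_\alpha\,\d u^\alpha$ modulo a total derivative; the two 1-forms are therefore functionally equivalent in the sense of Section~\ref{sec:prelims}. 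Restricting to $\Esp$ gives $\w_Q|_\Esp = G'{}^*(Q)_\alpha|_\Esp\,\d u^\alpha$ mod total $D$.

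From this reduced form both directions follow. If $Q_A$ is an adjoint-symmetry, then $G'{}^*(Q)_\alpha|_\Esp = 0$ by \eqref{adjsymm.deteqn}, so $\w_Q|_\Esp$ is a total derivative and thus functionally vanishes. For the converse, I would pair the reduced 1-form with an arbitrary evolutionary vector field $P^\alpha\partial_{u^\alpha}$: since only the order-zero differentials $\d u^\alpha$ survive, this pairing equals $\int P^\alpha\,G'{}^*(Q)_\alpha|_\Esp\,dx$, and its vanishing for all $P^\alpha$ forces $G'{}^*(Q)_\alpha|_\Esp = 0$, which is the adjoint-symmetry determining equation. I expect the main obstacle to be precisely this converse step, namely justifying that the ``mod total $D$'' ambiguity cannot conceal a nonzero coefficient once $\w_Q$ has been brought to order-zero form; this is exactly the point that the functional-equivalence framework of Section~\ref{sec:prelims} is set up to control, and care is also needed to keep the treatment of $\d u$ as the argument of a Frechet derivative consistent with the hooking and $D_I\,\d u^\alpha = \d u^\alpha_I$ conventions.
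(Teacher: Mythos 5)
Your proposal is correct and follows essentially the same route as the paper: write $\d G^A=(G^A)'(\d u)$ via \eqref{nor.1form}, integrate by parts to obtain the functionally equivalent form $G'{}^*(Q)_\alpha\,\d u^\alpha$ mod total $D$, and restrict to $\Esp$. The only difference is that you spell out the converse direction (via pairing with arbitrary $P^\alpha\partial_{u^\alpha}$), which the paper leaves implicit in its closing ``iff'' sentence.
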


These developments have used evolutionary (vertical) vector fields and evolutionary 1-forms.
It is straightforward to reformulate everything in terms of 
full vector fields and full 1-forms. 

First, consider the normal 1-form 
\begin{equation}
\begin{aligned}
\d G^A & = (G^A)_{x^i} \d x^i + (G^A)'(\d u) \\
& = (G^A)'(\Theta) + ( (G^A)_{x^i}+ (G^A)'(u_i)  ) \d x^i \\
& = (G^A)'(\Theta) + D_i G^A \d x^i
\end{aligned}
\end{equation}
which yields the relation 
\begin{equation}
\d G^A|_\Esp = (G^A)'(\Theta)|_\Esp . 
\end{equation}
Then, observe 
\begin{equation}
\begin{aligned}
Q_A \d G^A |_\Esp
& = Q_A (G^A)'(\Theta)|_\Esp \\
& = (G^A)'{}^*(Q_A)_\alpha|_\Esp \Theta^\alpha \text{ mod total } D . 
\end{aligned}
\end{equation}
As a consequence, 
$Q_A \d G^A |_\Esp$ vanishes mod total $D$ 
iff $Q_A$ satisfies the adjoint-symmetry determining equation \eqref{adjsymm.deteqn}.
Moreover, the determining equation itself can be expressed directly in terms of 
the 1-form $Q_A \d G^A |_\Esp$ by 
$E_{\Theta^\alpha}(Q_A \d G^A)|_\Esp = (G^A)'{}^*(Q_A)|_\Esp =0$. 

\begin{proposition}
The adjoint-symmetry determining equation \eqref{adjsymm.deteqn}
can be expressed geometrically as 
\begin{equation}
E_{\Theta^\alpha}(Q_A \d G^A)|_\Esp = 0 .
\end{equation}
\end{proposition}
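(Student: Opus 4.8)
The plan is to apply the operator $E_{\Theta^\alpha}$ to the one-form $Q_A\d G^A$ and verify that it reproduces exactly the left-hand side $G'{}^*(Q)_\alpha$ of the determining equation \eqref{adjsymm.deteqn}, after which restriction to $\Esp$ yields the claim. Here $E_{\Theta^\alpha}$ is understood as the contact-form analogue of the Euler operator \eqref{Euler.op}, namely $E_{\Theta^\alpha} = (-1)^{|I|}D_I\partial_{\Theta^\alpha_I}$, treating the contact one-forms $\Theta^\alpha_I$ as the independent variables on which it acts and extracting the functional coefficient of $\Theta^\alpha$.

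First I would use the splitting of the normal one-form derived just above, $\d G^A = (G^A)'(\Theta) + D_iG^A\,\d x^i$, to write $Q_A\d G^A = Q_A(G^A)_{u^\alpha_I}\Theta^\alpha_I + Q_A\,D_iG^A\,\d x^i$. The second term is horizontal and contains no contact variables, so it is annihilated by $E_{\Theta^\alpha}$. On the first term, the operator integrates by parts to transfer each $D_I$ off of $\Theta^\alpha_I$ onto the coefficient $Q_A(G^A)_{u^\alpha_I}$, giving the identity $E_{\Theta^\alpha}(Q_A\d G^A) = (-1)^{|I|}D_I\bigl(Q_A(G^A)_{u^\alpha_I}\bigr) = G'{}^*(Q)_\alpha$, valid identically in jet space. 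This is precisely the integration-by-parts manipulation already used to obtain $Q_A\d G^A|_\Esp = (G^A)'{}^*(Q_A)_\alpha|_\Esp\,\Theta^\alpha$ mod total $D$; the role of $E_{\Theta^\alpha}$ is simply to read off the coefficient of $\Theta^\alpha$ from such a functional one-form, discarding the total-derivative ambiguity.

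Finally I would restrict to the solution space, so that $E_{\Theta^\alpha}(Q_A\d G^A)|_\Esp = G'{}^*(Q)_\alpha|_\Esp$, which vanishes if and only if $Q_A$ satisfies the adjoint-symmetry determining equation \eqref{adjsymm.deteqn}. Since the identity holds off $\Esp$ as well, no information is lost in the restriction.

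The one place requiring care, and the step I expect to be the main obstacle, is pinning down the precise action of $E_{\Theta^\alpha}$: one must confirm that it genuinely behaves as a functional (interior) Euler operator, so that the horizontal $\d x^i$ contribution and the total-$D$ boundary terms generated by the integration by parts are consistently discarded, and that $D_J\Theta^\alpha = \Theta^\alpha_J$ makes the transfer of derivatives legitimate. Once $E_{\Theta^\alpha}$ is identified with $(-1)^{|I|}D_I\partial_{\Theta^\alpha_I}$ and the coefficients $Q_A(G^A)_{u^\alpha_I}$ and $D_iG^A$ are recognized as ordinary differential functions free of contact variables, the remaining computation is the same adjoint manipulation that underlies the adjoint-Frechet derivative, and no new difficulty arises.
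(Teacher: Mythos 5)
Your proposal is correct and follows essentially the same route as the paper: decompose $\d G^A$ into its contact part $(G^A)'(\Theta)$ and horizontal part $D_iG^A\,\d x^i$, discard the latter, and read off the coefficient of $\Theta^\alpha$ via the adjoint (integration-by-parts) manipulation, which yields $G'{}^*(Q)_\alpha$. The only cosmetic difference is that you establish the identity $E_{\Theta^\alpha}(Q_A\d G^A)=G'{}^*(Q)_\alpha$ off the solution space by letting $E_{\Theta^\alpha}$ annihilate the horizontal term directly, whereas the paper first restricts to $\Esp$ where $D_iG^A$ vanishes; both give the same conclusion.
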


\subsection{Examples of adjoint-symmetries}

To illustrate the results, 
some examples of PDEs that possess non-trivial adjoint-symmetries will be given. 

The Korteweg-de Vries (KdV) equation 
\begin{equation}
u_t + uu_x + u_{xxx}=0
\end{equation}
for shallow water waves 
is an example of an evolutionary wave equation. 
Its symmetries $\X=P\partial_u$ are the solutions of the determining equation 
\begin{equation}
G'(P)|_\Esp =(D_t P +D_x(uP) +D_x^3 P)|_\Esp =0,
\end{equation}
with $G'= D_t + D_x u + D_x^3$ being the Frechet derivative of the KdV equation,
where $P$ is a non-singular function of $t$, $x$, $u$, and derivatives of $u$ 
on the space of KdV solutions $\Esp$. 
The determining equation for adjoint-symmetries $\w = Q G'(\d u)$
is the adjoint equation 
\begin{equation}
G'{}^*(Q)|_\Esp = (-D_t Q -uD_x Q - D_x^3 Q)|_\Esp =0, 
\end{equation}
where $Q$ is a non-singular function of $t$, $x$, $u$, and derivatives of $u$ on $\Esp$. 

KdV adjoint-symmetries up to first-order $Q(t,x,u,u_t,u_x)$ 
are given by \cite{AncBlu1997} the span of 
\begin{equation}
Q^{(1)}= 1,
\quad
Q^{(2)}=u,
\quad
Q^{(3)}=tu-x . 
\end{equation}
The first two are part of a hierarchy of higher-order adjoint-symmetries generated by
a recursion operator $\mathcal{R} = D_x^2 + \tfrac{1}{3}u + \tfrac{1}{3}D_x^{-1}uD_x$ 
applied to $Q=u$. 
The third one along with all of the ones in the hierarchy
are related to symmetries of the KdV equation through the Hamiltonian operator
$\mathcal{H} = D_x$. 
If a linear combination of the lowest-order adjoint-symmetries is used 
like an invariant surface condition, $c_1 + c_2 (tu-x) + c_3 u =0$, 
then this yields $u = (c_2 x -c_1)/(c_2 t + c_3)$ 
which is a similarity solution of the KdV equation. 

An example of a non-evolutionary equation is 
\begin{equation}
\Delta \phi_t + \phi_x \Delta \phi_y - \phi_y\Delta \phi_x =0
\end{equation}
which governs the vorticity $\Omega=\Delta \phi$ for incompressible inviscid fluid flow 
in two spatial dimensions,
where the fluid velocity has the components $\vec v=(-\phi_y,\phi_x)$. 
The symmetries $\X = P\partial_\phi$ of this equation 
are the solutions of the determining equation 
\begin{equation}
G'(P)|_\Esp = (D_t \lapl P +\phi_x D_y \lapl P +\Delta \phi_y D_x P -\phi_y D_x \lapl P -\Delta \phi_x D_y P)|_\Esp =0, 
\end{equation}
where $P$ is a non-singular function of $t$, $x$, $y$, $\phi$, and derivatives of $\phi$ 
on the space of vorticity solutions $\Esp$,
with $G'=D_t \lapl +\phi_x D_y \lapl +\Delta \phi_y D_x  -\phi_y D_x \lapl  -\Delta \phi_x D_y$
being the Frechet derivative of the vorticity equation
given in terms of the total Laplacian operator $\lapl = D_x^2+D_y^2$. 
The determining equation for adjoint-symmetries $\w = Q G'(\d \phi)$
is the adjoint equation 
\begin{equation}
G'{}^*(Q)|_\Esp = -(D_t \lapl Q +D_y \lapl(\phi_x Q) +D_x( \Delta \phi_y Q) - D_x \lapl (\phi_yQ) -D_y(\Delta \phi_x Q)|_\Esp =0, 
\end{equation}
where $Q$ is a non-singular function of $t$, $x$, $y$, $\phi$, and derivatives of $\phi$ on $\Esp$. 

The first-order adjoint-symmetries $Q(t,x,y,\phi,\phi_t,\phi_x,\phi_y)$ 
are given by \cite{AncWan2020b} the span of 
\begin{equation}
Q^{(1)}=x^2+y^2,
\quad
Q^{(2)}=\phi,
\quad
Q^{(3)}=f(t),
\quad
Q^{(4)}=xf(t),
\quad
Q^{(5)}=yf(t),
\end{equation}
where $f(t)$ is an arbitrary smooth function. 
If a linear combination of these adjoint-symmetries is used 
like an invariant surface condition, 
$c_1(x^2+y^2) + c_2 \phi + c_3 f(t) +c_4 xf(t) + c_5 yf(t) =0$, 
then taking $c_2=-1$ gives $\phi = c_1(x^2+y^2) + (c_3 +c_4 x + c_5 y) f(t)$
which is a constant vorticity solution, 
with $\Omega= 2c_1$ and $\vec v = (-2c_1 y+c_5 f(t),2c_1 x+c_4 f(t))$. 

Maxwell's equations in free space 
are an example of an evolution system with spatial constraints:
\begin{equation}
{\vec E_t -\nabla\times \vec B} = 0,
\quad
{\vec B_t +\nabla\times \vec E} = 0,
\quad
{\nabla\cdot\vec E}={\nabla\cdot\vec B}=0
\end{equation}
(in relativistic units with the speed of light set to $1$).
The symmetries $\X = \vec P^E\cdot\partial_{\vec E} + \vec P^B\cdot\partial_{\vec B}$ of this system 
are the solutions of the determining equations 
\begin{equation}
G'\begin{pmatrix} \vec P^E \\ \vec P^B \end{pmatrix}\bigg|_\Esp = 
\begin{pmatrix} 
(D_t \vec P^E -\grad\times \vec P^B)|_\Esp \\
(D_t \vec P^B +\grad\times \vec P^E)|_\Esp \\
(\grad\cdot \vec P^E)|_\Esp \\ 
(\grad\cdot \vec P^B)|_\Esp \\
\end{pmatrix}
=0, 
\end{equation}
where $\vec P^E$ and $\vec P^B$ are non-singular vector functions of 
$t$, $x$, $y$, $z$, $\vec E$, $\vec B$, and derivatives of $\vec E$, $\vec B$
on the space of Maxwell solutions $\Esp$,
with 
$G'=\begin{pmatrix} D_t & -\grad\times \\ \grad\times & D_t \\ \grad\cdot & 0 \\ 0 & \grad\cdot \end{pmatrix}$
being the Frechet derivative of the system
in terms of the total derivative operator $\grad = (D_x,D_y,D_z)$. 
The determining equation for adjoint-symmetries 
$\w =\begin{pmatrix} \vec Q^E & \vec Q^B & Q^E & Q^B \end{pmatrix} G'\begin{pmatrix} \d\vec E \\ \d\vec B \end{pmatrix}$
is the adjoint equation 
\begin{equation}
G'{}^*\begin{pmatrix} \vec Q^E & \vec Q^B & Q^E & Q^B \end{pmatrix}\Big|_\Esp = 
\begin{pmatrix} 
(-D_t \vec Q^E +\grad\times \vec Q^B -\grad Q^E)|_\Esp \\ 
(-D_t \vec Q^B -\grad\times \vec Q^E - \grad Q^B)|_\Esp 
\end{pmatrix}
=0, 
\end{equation}
where the vectors $\vec Q^E$, $\vec Q^B$, and the scalars $Q^E$, $Q^B$,
are non-singular functions of $t$, $x$, $y$, $z$, $\vec E$, $\vec B$, and derivatives of $\vec E$, $\vec B$ on $\Esp$. 
Note that the adjoint $*$ here includes a matrix transpose applied to the row matrix 
comprising the adjoint-symmetry vector and scalar functions. 

Because Maxwell's equations are a linear system and contain constraints, 
it possess three types of adjoint-symmetries \cite{AncPoh2002,AncPoh2004}: 
elementary adjoint-symmetries such that $\vec Q^E$, $\vec Q^B$, $Q^E$, $Q^B$ are functions only of $t$, $x$, $y$, $z$;
gauge adjoint-symmetries given by 
$\vec Q^E = \grad\chi^E$, $\vec Q^B=\grad\chi^B$, $Q^E=-D_t\chi^E$, $Q^B=-D_t\chi^B$ 
in terms of scalars $\chi^E$ and $\chi^B$ that are arbitrary non-singular functions of 
$t$, $x$, $y$, $z$, $\vec E$, $\vec B$, and derivatives of $\vec E$, $\vec B$ on $\Esp$;
and a hierarchy of linear adjoint-symmetries. 
The linear adjoint-symmetries of zeroth order are given by the span of 
\begin{equation}
\vec Q^E = \vec\xi\times\vec B +\zeta\vec E,
\quad
\vec Q^B = -\vec\xi\times\vec E +\zeta\vec B,
\quad
Q^E = \vec\xi\cdot\vec E,
\quad
Q^B = \vec\xi\cdot\vec B
\end{equation}
and 
\begin{equation}
\vec Q^E = \vec\xi\times\vec E -\zeta\vec B,
\quad
\vec Q^B = \vec\xi\times\vec B +\zeta\vec E,
\quad
Q^E = -\vec\xi\cdot\vec B,
\quad
Q^B = \vec\xi\cdot\vec E
\end{equation}
where 
\begin{equation}
\begin{aligned}
\vec\xi & = \vec a_0 +\vec a_1 \times \vec x + \vec a_2 t + a_3 \vec x + a_4 t \vec x + (\vec a_5 \cdot \vec x) \vec x - \tfrac{1}{2}\vec a_5 (\vec x\cdot\vec x+ t^2) ,
\\
\zeta & = a_0 + \vec a_2\cdot\vec x +a_3 t +\tfrac{1}{2} a_4 (\vec x\cdot\vec x+ t^2) +(\vec a_5\cdot\vec x)t ,
\end{aligned}
\end{equation}
in terms of arbitrary constant scalars $a_0$, $a_3$, $a_4$, 
and arbitrary constant vectors $\vec a_0$, $\vec a_1$, $\vec a_2$, $\vec a_5$, 
with $\vec x = (x,y,z)$. 
The pair $(\vec\xi,\zeta)$ represents a conformal Killing vector in Minkowski space $\Rnum^{3,1}$. 

These two zeroth-order adjoint-symmetries are related by the duality symmetry 
$(\vec E,\vec B)\to (\vec B,-\vec E)$. 
The linear first-order adjoint-symmetries are more complicated and involve conformal Killing-Yano tensors. 
All higher-order adjoint-symmetries can be obtained from the zeroth and first order adjoint-symmetries by taking Lie derivatives with respect to conformal Killing vectors. 
Their explicit description can be found in \Ref{AncPoh2002,AncPoh2004}. 
An unexplored question is whether the lowest-order adjoint-symmetries 
can be used like an invariant surface condition to produce solutions of Maxwell's equations.

\section{Some applications}\label{sec:applications}

Two geometrical applications of Theorem~\ref{thm:adjsymm}
will be presented. 
The first application is a geometrical derivation of a well-known formula that generates
a conservation law from a pair consisting of a symmetry and an adjoint-symmetry. 
This derivation will use the functional pairing \eqref{pairing}. 
The second application is a geometrical derivation of three actions of symmetries on adjoint-symmetries. 
These symmetries actions have been obtained in recent work using an algebraic point of view \cite{AncWan2020b}.
They will be shown here to arise from Cartan's formula for the Lie derivative of
an adjoint-symmetry 1-form \eqref{Q.1form}. 

It will be useful to work with the determining equations 
for symmetries and adjoint-symmetries off of the solution space $\Esp$ 
of a given PDE system \eqref{pde}.
More precisely, the determining equations will be expressed in the full jet space containing $\Esp$. 

\begin{remark}
A PDE system \eqref{pde} will be assumed to be regular \cite{Anc-review}, 
so that Hadamard's lemma holds: 
a differential function $f$ satisfies $f|_\Esp =0$ iff $f=R_f(G)$,
where $R_f$ is a linear differential operator whose coefficients are non-singular on $\Esp$. 
\end{remark}

Consequently, 
for symmetries, 
$G'(P)^A|_\Esp=0$ holds iff 
\begin{equation}\label{symm.deteqn.offsoln} 
G'(P)^A = R_P(G)^A , 
\end{equation}
and likewise for adjoint-symmetries, 
$G'{}^*(Q)_\alpha|_\Esp=0$ holds iff 
\begin{equation}\label{adjsymm.deteqn.offsoln}
G'{}^*(Q)_\alpha = R_Q(G)_\alpha , 
\end{equation}
where $R_P$ and $R_Q$ are linear differential operators whose coefficients are non-singular on $\Esp$.

\subsection{Conservation laws from symmetries and adjoint-symmetries}\label{sec:conslaw}

The functional pairing \eqref{pairing} between 
a symmetry vector field \eqref{P.vector}
and an adjoint-symmetry 1-form \eqref{Q.1form} is given by 
\begin{align}\label{P.Q.pairing}
\langle \pr\X_P, \w_Q \rangle 
= \langle \pr P^\alpha \partial_{u^\alpha}, Q_A \d G^A \rangle
= \int Q_A G'(P)^A\, dx 
\end{align}
from identity \eqref{Frechet.hook.rel}. 
This pairing in local form \eqref{pairing.local} is the expression 
\begin{equation}
Q_A G'(P)^A \text{ mod total } D . 
\end{equation}
There are two different ways to evaluate it. 

First, since $\X_P$ is a symmetry, 
$Q_A G'(P)^A = Q_A R_P(G)^A$. 
Second, since $\w_Q$ is an adjoint-symmetry, 
$Q_A G'(P)^A = G'{}^*(Q)_\alpha P^\alpha +D_i\Psi^i(P,Q)_G 
= P^\alpha R_Q(G)_\alpha +D_i\Psi^i(P,Q;G)$,
where
\begin{equation}\label{current.P.Q}
\Psi^i(P,Q;G) = (D_K Q_A) (D_J P^\alpha)E_{u^\alpha_{iJ}}^{K}(G^A) . 
\end{equation}
Hence, on $\Esp$, 
$Q_A G'(P)^A|_\Esp = D_i\Psi^i(P,Q)_G|_\Esp = 0$
which is equivalent to $\langle \pr\X_P, \w_Q \rangle|_\Esp =0$.
This establishes the following conservation law. 

\begin{theorem}\label{thm:pairing.vanish}
Vanishing of the functional pairing \eqref{P.Q.pairing}
for any symmetry \eqref{P.vector} and any adjoint-symmetry \eqref{Q.1form}
corresponds to a conservation law
\begin{equation}
D_i\Psi^i(P,Q;G)|_\Esp = 0
\end{equation}
holding for the PDE system $G^A=0$,
where the conserved current $\Psi^i(P,Q;G)$ is given by expression \eqref{current.P.Q}.
\end{theorem}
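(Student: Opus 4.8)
The plan is to compute the local form of the functional pairing \eqref{P.Q.pairing} and show that, on the solution space $\Esp$, it reduces to a total divergence that moreover vanishes, thereby exhibiting $\Psi^i(P,Q;G)$ as a conserved current. The starting point is the local density $Q_A G'(P)^A$, which represents the pairing modulo a total derivative by \eqref{pairing.local} together with identity \eqref{Frechet.hook.rel}. I would evaluate this density in two ways, using the symmetry property of $P^\alpha$ and the adjoint-symmetry property of $Q_A$ separately.

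First I would move the Frechet derivative off $P$ and onto $Q$ by applying the integration-by-parts identity relating $G'$ and $G'{}^*$ that is recorded in Section~\ref{sec:prelims}, taken with $f=G^A$, $F_1=P$, $F_2=Q_A$ and summed over $A$. This yields
\begin{equation}
Q_A G'(P)^A = P^\alpha G'{}^*(Q)_\alpha + D_i\Psi^i(P,Q;G),
\end{equation}
with $\Psi^i$ given precisely by \eqref{current.P.Q}. Invoking the off-shell form \eqref{adjsymm.deteqn.offsoln} of the adjoint-symmetry determining equation, $G'{}^*(Q)_\alpha = R_Q(G)_\alpha$, the first term on the right is a linear differential operator acting on $G$ and hence vanishes on $\Esp$. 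Therefore $Q_A G'(P)^A|_\Esp = D_i\Psi^i(P,Q;G)|_\Esp$, which already identifies $\Psi^i$ as the candidate current and shows the pairing density is a total divergence on shell.

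Second I would use the symmetry property of $P$ in the form \eqref{symm.deteqn.offsoln}, $G'(P)^A = R_P(G)^A$, so that $Q_A G'(P)^A = Q_A R_P(G)^A$ vanishes on $\Esp$ as well. Combining the two evaluations gives $D_i\Psi^i(P,Q;G)|_\Esp = 0$, the asserted conservation law. The same two facts show the functional pairing itself vanishes on $\Esp$: restricting \eqref{P.Q.pairing} to solutions and using $G'(P)^A|_\Esp=0$ gives $\langle \pr\X_P,\w_Q\rangle|_\Esp = 0$. This establishes the stated correspondence in both directions, since the on-shell vanishing of the local pairing density is, by the first step, exactly the statement that $\Psi^i$ is conserved.

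I do not expect a serious conceptual obstacle here, as the result follows by combining the two determining equations through a single integration by parts. The main care needed is bookkeeping: verifying that the current produced by the adjoint relation is literally \eqref{current.P.Q}, with the higher Euler operators $E_{u^\alpha_{iJ}}^{K}$ and the multi-index summations handled correctly, and ensuring that all manipulations are carried out off $\Esp$ in the full jet space so that Hadamard's lemma legitimately converts the two conditions $G'(P)^A|_\Esp=0$ and $G'{}^*(Q)_\alpha|_\Esp=0$ into the operator identities \eqref{symm.deteqn.offsoln} and \eqref{adjsymm.deteqn.offsoln} used above.
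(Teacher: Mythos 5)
Your proposal is correct and follows essentially the same route as the paper: both evaluate the local pairing density $Q_A G'(P)^A$ in two ways, once via $G'(P)^A = R_P(G)^A$ and once via the integration-by-parts identity yielding $P^\alpha G'{}^*(Q)_\alpha + D_i\Psi^i(P,Q;G)$ with $G'{}^*(Q)_\alpha = R_Q(G)_\alpha$, then equate them on $\Esp$. The only difference is the order in which the two evaluations are presented, which is immaterial.
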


\subsection{Action of symmetries on adjoint-symmetries}\label{sec:symmaction}

For any PDE system \eqref{pde}, 
its set of adjoint-symmetries is a linear space,
and as shown in \Ref{AncWan2020b}, 
symmetries of the PDE system have three different actions on this space. 

The primary symmetry action can be derived from 
the Lie derivative of an adjoint-symmetry 1-form 
with respect to a symmetry vector field. 

\begin{proposition}\label{prop:lieder.action}
If $\w_Q$ is an adjoint-symmetry 1-form \eqref{Q.1form}, 
namely $\w_Q|_\Esp =0 \text{ (mod total $D$)}$, 
then its Lie derivative with respect to any symmetry vector $\X_P=P^\alpha\partial_{u^\alpha}$
yields an adjoint-symmetry 1-form, 
\begin{equation}
\lieder{\X_P}\w_Q|_\Esp = \w_{S_P(Q)}|_\Esp = 0 \text{ (mod total $D$) }
\end{equation}
where
\begin{equation}\label{lieder.symmaction}
S_P(Q)_A = Q'(P)_A + R_P^*(Q)_A
\end{equation}
are its components. 
\end{proposition}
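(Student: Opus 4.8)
The plan is to evaluate $\lieder{\X_P}\w_Q$ through Cartan's formula $\lieder{\X_P}\w_Q = \pr\X_P\hook\d\w_Q + \d(\pr\X_P\hook\w_Q)$ applied to the adjoint-symmetry 1-form $\w_Q=Q_A\,\d G^A$. First I would use $\d^2=0$ to obtain $\d\w_Q=\d Q_A\wedge\d G^A$, and then expand the two hook terms with the help of identity \eqref{Frechet.hook.rel}, namely $\pr\X_P\hook\d G^A=G'(P)^A$, together with its analogue $\pr\X_P\hook\d Q_A=Q'(P)_A$. The contributions $\d Q_A\,G'(P)^A$ coming from the interior product of $\d\w_Q$ and from $\d(\pr\X_P\hook\w_Q)$ carry opposite signs and cancel, leaving the compact expression
\begin{equation}
\lieder{\X_P}\w_Q = Q'(P)_A\,\d G^A + Q_A\,\d\big(G'(P)^A\big) .
\end{equation}
The first term already displays the component $Q'(P)_A$ of $S_P(Q)_A$.

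The core of the argument is to rewrite the second term. Here I would use the off-solution form \eqref{symm.deteqn.offsoln} of the symmetry condition, $G'(P)^A=R_P(G)^A$, which holds throughout jet space, to replace $\d(G'(P)^A)$ by $\d(R_P(G)^A)$. Expanding $R_P$ in terms of its differential-function coefficients and applying the product rule with the commutation $\d\,D_I=D_I\,\d$, the differential splits into a term carrying $\d$ on the coefficients, each summand of which is proportional to some $D_I G^B$ and hence vanishes on $\Esp$, plus the term $R_P(\d G)^A$ in which $R_P$ acts on the normal 1-forms. Restricting to $\Esp$ thus gives $Q_A\,\d(G'(P)^A)|_\Esp=Q_A\,R_P(\d G)^A|_\Esp$. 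Passing to the functionally equivalent representative (integrating by parts mod total $D$) transfers the operator onto $Q$ and produces its adjoint, $Q_A\,R_P(\d G)^A\equiv R_P^*(Q)_A\,\d G^A$. Combining the two terms yields $\lieder{\X_P}\w_Q|_\Esp\equiv\big(Q'(P)_A+R_P^*(Q)_A\big)\d G^A|_\Esp=\w_{S_P(Q)}|_\Esp$, which is precisely formula \eqref{lieder.symmaction}.

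It then remains to show that this 1-form functionally vanishes on $\Esp$, equivalently (by Theorem~\ref{thm:adjsymm}) that $S_P(Q)$ is an adjoint-symmetry. The cleanest route I would take is to work instead with the functionally equivalent representative $\w_Q\equiv G'{}^*(Q)_\alpha\,\d u^\alpha$, whose coefficient satisfies $G'{}^*(Q)_\alpha|_\Esp=0$, i.e. $G'{}^*(Q)_\alpha=R_Q(G)_\alpha$ by \eqref{adjsymm.deteqn.offsoln}. Since $\pr\X_P$ commutes with the total derivatives $D_i$, the Lie derivative sends total-$D$ 1-forms to total-$D$ 1-forms and therefore respects functional equivalence, giving $\lieder{\X_P}\w_Q\equiv\big(G'{}^*(Q)_\alpha\big)'(P)\,\d u^\alpha+G'{}^*(Q)_\alpha\,\d P^\alpha$. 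Differentiating $G'{}^*(Q)_\alpha=R_Q(G)_\alpha$ exhibits $\big(G'{}^*(Q)_\alpha\big)'(P)$ as a combination of total derivatives of $G$ and of $G'(P)=R_P(G)$, all of which vanish on $\Esp$; together with $G'{}^*(Q)_\alpha|_\Esp=0$ this forces $\lieder{\X_P}\w_Q|_\Esp\equiv0$, completing the proof.

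I expect the main obstacle to lie in the careful handling of the ``mod total $D$'' equivalence. One must justify both that the integration by parts transferring $R_P$ to $R_P^*$ generates only total-derivative terms, and that $\lieder{\X_P}$ maps total-$D$ 1-forms to total-$D$ 1-forms, which rests on $[\pr\X_P,D_i]=0$, so that functional equivalence is genuinely preserved under the Lie derivative. The other delicate point is to confirm that the coefficient-differentiated terms in $\d(R_P(G)^A)$ are truly proportional to the equations $D_I G^B=0$ and hence drop out on $\Esp$; this is where the regularity assumption underlying Hadamard's lemma is quietly invoked.
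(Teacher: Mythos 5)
Your proposal is correct and follows essentially the same route as the paper's proof: expand $\lieder{\X_P}\w_Q$ by the Leibniz rule (which you recover via Cartan's formula rather than invoking the derivation property directly), substitute $G'(P)^A=R_P(G)^A$, pass $\d$ through $R_P$ on $\Esp$, and integrate by parts to obtain $R_P^*(Q)_A\,\d G^A$, yielding the components \eqref{lieder.symmaction}. Your additional explicit verification that $\w_{S_P(Q)}|_\Esp$ functionally vanishes, via the representative $G'{}^*(Q)_\alpha\,\d u^\alpha=R_Q(G)_\alpha\,\d u^\alpha$ and the commutation $[\pr\X_P,D_i]=0$, is a sound supplement to a step the paper leaves implicit.
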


Here and throughout, 
$R_P$ and $R_Q$ are the linear differential operators
determined by equations \eqref{symm.deteqn.offsoln} and \eqref{adjsymm.deteqn.offsoln}.
The adjoints of these operators are denoted $R_P^*$ and $R_Q^*$. 

\begin{proof}

Recall that the Lie derivative has the following properties:
it acts as a derivation;
it commutes with the differential $d$;
it reduces to the Frechet derivative when acting on a differential function. 

By use of these properties, 
\begin{equation}
\begin{aligned}
\lieder{\X_P}\w_Q 
& = \lieder{\X_P}(Q_A \d G^A) \\
& = (\lieder{\X_P}Q_A) \d G^A + Q_A \lieder{\X_P}(\d G^A) \\
& = Q'(P)_A \d G^A + Q_A \d(G'(P)^A) \\
& = Q'(P)_A \d G^A + Q_A \d(R_P(G)^A) . 
\end{aligned}
\end{equation}
The last term can be simplified on $\Esp$:
$Q_A \d(R_P(G)^A)|_\Esp= Q_A R_P(\d G)^A|_\Esp = R_P^*(Q)_A \d G^A$ 
$\text{(mod total $D$)}$.
This yields 
\begin{align}
\lieder{\X_P}\w_Q |_\Esp
= (( Q'(P)_A + R_P^*(Q)_A ) \d G^A)|_\Esp \text{ (mod total $D$) }, 
\end{align}
completing the derivation. 

\end{proof}

There is an elegant formula, due to Cartan, for the Lie derivative in terms of 
the operations $d$ and $\hook$. 
This formula gives rise to two additional symmetry actions. 

\begin{theorem}\label{Cartan.symm.action}
The terms in Cartan's formula
\begin{equation}\label{cartan.formula} 
\lieder{\X_P}\w_Q = \d(\pr\X_P \hook \w_Q) +\pr\X_P \hook (d\w_Q)
\end{equation}
evaluated on $\Esp$
each yield an action of symmetries on adjoint symmetries. 
The action produced by the Lie derivative term has the components \eqref{lieder.symmaction},
and the actions produced by the differential term and the hook term 
respectively have the components
\begin{align}
S_{1\,P}(Q) & = R_P^*(Q)_A - R_Q^*(P)_A , 
\label{1stterm.symmaction}
\\
S_{2\,P}(Q) & = Q'(P)_A + R_Q^*(P)_A . 
\label{2ndterm.symmaction}
\end{align}
\end{theorem}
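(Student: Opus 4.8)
The plan is to evaluate the differential and hook terms of Cartan's formula \eqref{cartan.formula} on $\Esp$ and read off their $\d G^A$-components, reusing the reduction scheme from the proof of Proposition~\ref{prop:lieder.action}: substitute $G'(P)^A = R_P(G)^A$ and $G'{}^*(Q)_\alpha = R_Q(G)_\alpha$ from the off-solution determining equations \eqref{symm.deteqn.offsoln}--\eqref{adjsymm.deteqn.offsoln}, commute $\d$ past the total derivatives inside $R_P$ and $R_Q$ (the coefficient-variation terms carry an undifferentiated factor of $G$ and so drop on $\Esp$), and integrate by parts to replace each operator by its adjoint. The Lie-derivative term is already done in Proposition~\ref{prop:lieder.action} and produces the components \eqref{lieder.symmaction}; by Cartan's formula the remaining two contributions must sum to these, which both checks the computation and lets me recover the third expression from the other two.

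First I would compute the hook term. Since $\w_Q = Q_A\,\d G^A$ and $\d^2 G^A = 0$, we have $\d\w_Q = \d Q_A\wedge\d G^A$, and the antiderivation property of $\hook$ gives
\begin{equation}
\pr\X_P\hook(\d\w_Q) = (\pr\X_P\hook\d Q_A)\,\d G^A - \d Q_A\,(\pr\X_P\hook\d G^A) = Q'(P)_A\,\d G^A - G'(P)^A\,\d Q_A ,
\end{equation}
where I used \eqref{Frechet.hook.rel} and its analogue $\pr\X_P\hook\d Q_A = Q'(P)_A$. The first term is already in $\d G^A$-form. Everything then hinges on reducing the cross term, for which the target identity is
\begin{equation}
G'(P)^A\,\d Q_A \equiv -R_Q^*(P)_A\,\d G^A \quad\text{on }\Esp,\ \text{mod total }D .
\end{equation}
Granting this, the hook term becomes $\big(Q'(P)_A + R_Q^*(P)_A\big)\,\d G^A$, which is exactly the components \eqref{2ndterm.symmaction}.

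For the differential term I would start from $\pr\X_P\hook\w_Q = Q_A\,G'(P)^A$ (a differential function, by \eqref{Frechet.hook.rel}) and use that $\d$ is a derivation:
\begin{equation}
\d(\pr\X_P\hook\w_Q) = (\d Q_A)\,G'(P)^A + Q_A\,\d\big(G'(P)^A\big) .
\end{equation}
The second term is reduced precisely as in Proposition~\ref{prop:lieder.action} to $R_P^*(Q)_A\,\d G^A$, while the first term is the same cross term, so by the target identity it contributes $-R_Q^*(P)_A\,\d G^A$. Thus the differential term becomes $\big(R_P^*(Q)_A - R_Q^*(P)_A\big)\,\d G^A$, matching \eqref{1stterm.symmaction}. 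Adding the two results recovers $S_{1\,P}(Q)_A + S_{2\,P}(Q)_A = Q'(P)_A + R_P^*(Q)_A = S_P(Q)_A$, in agreement with Cartan's formula and Proposition~\ref{prop:lieder.action}; moreover each of $S_{1\,P}(Q)$ and $S_{2\,P}(Q)$ is again an adjoint-symmetry, since the associated 1-form functionally vanishes on $\Esp$ by Theorem~\ref{thm:adjsymm}.

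The main obstacle is the cross-term identity for $G'(P)^A\,\d Q_A$. Because its $\d u$-components all carry the factor $G'(P)^A = R_P(G)^A$, the term vanishes under a naive pointwise restriction to $\Esp$; the nonzero contribution $-R_Q^*(P)_A\,\d G^A$ can only be produced by integrating by parts \emph{before} restricting, so the delicate step is to transfer the Frechet derivative off $P$ using the adjoint-Frechet identity (the $\Psi^i$ relation), rewrite the result through $\d\big(G'{}^*(Q)_\alpha\big) = \d\big(R_Q(G)_\alpha\big)$, and then move $R_Q$ onto $P$ as $R_Q^*$. Getting the sign and placement of $R_Q^*(P)$ consistent in both terms---so that they reproduce \eqref{1stterm.symmaction} and \eqref{2ndterm.symmaction} and sum to \eqref{lieder.symmaction}---is the real bookkeeping, and I would cross-check it against the algebraic symmetry actions derived in \cite{AncWan2020b}.
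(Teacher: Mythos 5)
Your reduction of the ``first'' pieces of each term (the $Q_A\,\d(R_P(G)^A)\to R_P^*(Q)_A\,\d G^A$ piece of the differential term and the $Q'(P)_A\,\d G^A$ piece of the hook term) matches the paper, but your argument hinges on the cross-term identity
\begin{equation*}
G'(P)^A\,\d Q_A \equiv -R_Q^*(P)_A\,\d G^A \quad\text{on }\Esp,\ \text{mod total }D,
\end{equation*}
and this identity is false in general. Every coefficient of the 1-form $G'(P)^A\,\d Q_A=R_P(G)^A\,(Q_A)_{u^\beta_J}\,\d u^\beta_J$ carries an undifferentiated factor $R_P(G)^A$, which vanishes pointwise on the prolonged solution space; since total derivatives are tangent to $\Esp$, integrating by parts before restricting cannot change the functional equivalence class of the restriction, so this term is functionally zero on $\Esp$ no matter in which order the operations are done. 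Your identity would therefore force $R_Q^*(P)_A\,\d G^A|_\Esp\equiv 0$ mod total $D$, i.e.\ that $R_Q^*(P)$ is itself an adjoint-symmetry, which is not true in general (it would make $Q'(P)$ an adjoint-symmetry too, by subtracting it from $S_{2\,P}(Q)$). The same objection undercuts your treatment of the differential term and also your closing inference: the individual terms of Cartan's formula do \emph{not} functionally vanish on $\Esp$ (the differential term reduces to $R_P^*(Q)_A\,\d G^A|_\Esp$, which is generically nonzero), so Theorem~\ref{thm:adjsymm} cannot be applied to them directly.

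The mechanism the paper actually uses is different: each term of Cartan's formula is evaluated in \emph{two independent ways} --- once by keeping $\w_Q$ in the $\d G^A$ basis and using $G'(P)^A=R_P(G)^A$, and once by first passing to the functionally equivalent form $\w_Q\equiv R_Q(G)_\alpha\,\Theta^\alpha+\dots$ via $G'{}^*(Q)_\alpha=R_Q(G)_\alpha$ and only then applying $\d$ or $\pr\X_P\hook$. The first route gives $R_P^*(Q)_A\,\d G^A$ (resp.\ $Q'(P)_A\,\d G^A$) and the second gives $R_Q^*(P)_A\,\d G^A$ (resp.\ $-R_Q^*(P)_A\,\d G^A$); equating the two evaluations of the same object yields $\bigl((R_P^*(Q)_A-R_Q^*(P)_A)\,\d G^A\bigr)|_\Esp\equiv 0$ and $\bigl((Q'(P)_A+R_Q^*(P)_A)\,\d G^A\bigr)|_\Esp\equiv 0$ mod total $D$, and it is these vanishing statements --- not an identification of each term with $S_{1\,P}(Q)_A\,\d G^A$ or $S_{2\,P}(Q)_A\,\d G^A$ --- that exhibit \eqref{1stterm.symmaction} and \eqref{2ndterm.symmaction} as adjoint-symmetries via Theorem~\ref{thm:adjsymm}. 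The missing idea in your proposal is this second evaluation of each whole term through the adjoint-symmetry relation; without it the $R_Q^*(P)$ contributions cannot be produced.
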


\begin{proof}

Consider the first term on righthand side in the formula \eqref{cartan.formula}. 
It can be evaluated in two different ways. 
Firstly, 
$\pr\X_P\hook (Q_A \d G^A) = Q_A G'(P)^A = Q_A R_P(G)^A$
yields
\begin{equation}\label{1stterm.simp1}
\d(\pr\X_P\hook (Q_A \d G^A))|_\Esp 
= \d(Q_A R_P(G)^A)|_\Esp
= (Q_A R_P(\d G^A ))|_\Esp
= (R_P^*(Q)_A \d G^A)|_\Esp . 
\end{equation}
Secondly, 
$Q_A \d G^A = R_Q(G)_\alpha\Theta^\alpha + Q_A (D_i G^A) \d x^i \text{ (mod total $D$) }$
gives
$\pr\X_P\hook (Q_A \d G^A) 
= \pr\X_P\hook ( R_Q(G)_\alpha\Theta^\alpha + Q_A (D_i G^A) \d x^i  \text{ (mod total $D$) } )
= R_Q(G)_\alpha P^\alpha \text{ (mod total $D$)}$. 
This yields 
\begin{equation}\label{1stterm.simp2}
\begin{aligned}
\d(\pr\X_P\hook (Q_A \d G^A))|_\Esp 
& = \d( R_Q(G)_\alpha P^\alpha \text{ (mod total $D$) } )|_\Esp\\
& = ( R_Q(\d G)_\alpha P^\alpha \text{ (mod total $D$) } )|_\Esp\\
& = ( R_Q^*(P)_A \d G^A \text{ (mod total $D$) } )|_\Esp . 
\end{aligned}
\end{equation}

Then, 
equating expressions \eqref{1stterm.simp1} and \eqref{1stterm.simp2} leads to 
the result 
\begin{equation}
( (R_P^*(Q)_A - R_Q^*(P)_A) \d G^A )|_\Esp = 0 \text{ (mod total $D$)}|_\Esp . 
\end{equation}
This equation shows that the symmetry action \eqref{1stterm.symmaction}
produces an adjoint-symmetry.

Now consider second term on righthand side in the formula \eqref{cartan.formula}. 
Similarly to the first term, 
it can be evaluated in two different ways. 
Firstly, 
$\d\w_Q = \d Q_A\wedge \d G^A$
yields
\begin{equation}
\pr\X_P\hook( \d Q_A\wedge \d G^A )
= Q'(P)_A \d G^A - G'(P)^A \d Q_A
= Q'(P)_A \d G^A - R_P(G)^A \d Q_A . 
\end{equation}
Hence, on $\Esp$, 
\begin{equation}\label{2ndterm.simp1}
(\pr\X_P\hook( \d Q_A\wedge \d G^A ))|_\Esp
= ( Q'(P)_A \d G^A )|_\Esp . 
\end{equation}
Secondly, 
$\d\w_Q = \d( R_Q(G)_\alpha \Theta^\alpha + Q_A (D_i G^A) \d x^i ) \text{ (mod total $D$)}$
gives
\begin{equation}
\d\w_Q|_\Esp = ( R_Q(\d G)_\alpha\wedge \Theta^\alpha + Q_A (D_i \d G^A)\wedge \d x^i )|_\Esp
\text{ (mod total $D$) }. 
\end{equation}
This yields 
\begin{equation}\label{2ndterm.simp2}
\begin{aligned}
& (\pr\X_P\hook( R_Q(\d G)_\alpha\wedge \Theta^\alpha + Q_A (D_i \d G^A)\wedge \d x^i ))|_\Esp\\
& =( R_Q(G'(P))_\alpha \Theta^\alpha - P^\alpha R_Q(\d G)_\alpha + Q_A (D_i G'(P)^A) \d x^i )|_\Esp\\
& = -( R_Q^*(P)_A \d G^A )|_\Esp \text{ (mod total $D$)} . 
\end{aligned}
\end{equation}
Equating expressions \eqref{2ndterm.simp1} and \eqref{2ndterm.simp2}
then gives the equation
\begin{equation}
( (Q'(P)_A + R_Q^*(P)_A) \d G^A )|_\Esp = 0 \text{ (mod total $D$)}|_\Esp 
\end{equation}
showing that the symmetry action \eqref{2ndterm.symmaction}
produces an adjoint-symmetry. 

\end{proof}

Observe that the three actions \eqref{lieder.symmaction}, 
\eqref{1stterm.symmaction}, \eqref{2ndterm.symmaction}
are related by 
\begin{equation}
S_{1\,P}(Q) +S_{2\,P}(Q) = S_{P}(Q) . 
\end{equation}
Each action is mapping on the linear space of adjoint-symmetries $Q_A$. 
Algebraic properties of these actions can be found in \Ref{AncWan2020b}.

\section{Geometrical adjoint-symmetries of evolution equations}\label{sec:evolpde}

A general system of evolution equations of order $N$ has the form 
\begin{equation}\label{pde.evol}
u_t^\alpha =g^\alpha(x,u,\partial_x u,\ldots,\partial_x^N u)
\end{equation}
where
$t$ is the time variable, 
$x^i$, $i=1,\ldots,n$, are now the space variables,
and $u^\alpha$, $\alpha=1,\ldots,m$, are the dependent variables. 
The space of formal solutions $u^\alpha(t,x)$ of the system will be denoted $\Esp$. 

The developments for general PDE systems can be specialized to evolution systems, 
with $G^\alpha = u_t^\alpha - g^\alpha$ 
via identifying the indices $A=\alpha$ ($M=m$).
On $\Esp$, 
since $u_t^\alpha$ can be eliminated through the evolution equations, 
the components of symmetries and adjoint-symmetries 
can be assumed to contain only $u^\alpha$ and its spatial derivatives
in addition to $t$ and $x^i$. 
Hereafter, multi-indices will refer to spatial derivatives. 

A symmetry is thereby an evolutionary vector field 
\begin{equation}\label{P.vector.evol}
\X_P = P^\alpha(t,x,\partial_x u,\ldots,\partial_x^k u)\partial_{u^\alpha}
\end{equation}
satisfying the linearization of the evolution system on $\Esp$:
\begin{equation}\label{symm.deteqn.evol}
(\pr\X_P(u_t^\alpha - g^\alpha))|_\Esp = (D_t P^\alpha - g'(P)^\alpha)|_\Esp =0 . 
\end{equation}
Off of $\Esp$, 
$D_t P^\alpha = (P_t + P'(g))^\alpha + P'(G)^\alpha$,
whereby $R_P= P'$. 
Consequently, the symmetry determining equation \eqref{symm.deteqn.evol}
can be expressed simply as 
\begin{equation}\label{symm.deteqn.evol.offsolnsp}
(P_t + [g,P])^\alpha =0 . 
\end{equation}

The determining equation for adjoint-symmetries $Q_\alpha(t,x,\partial_x u,\ldots,\partial_x^l u)$
is given by the adjoint linearization of the evolution system on $\Esp$:
\begin{equation}\label{adjsymm.deteqn.evol}
(-D_t Q - g'{}^*(Q))_\alpha|_\Esp =0 . 
\end{equation}
Similarly to the symmetry case, 
here $R_Q= -Q'$ off of $\Esp$,
and the adjoint-symmetry determining equation simply becomes
\begin{equation}\label{adjsymm.deteqn.evol.offsolnsp}
(Q_t + Q'(g) +g'{}^*(Q))_\alpha =0 . 
\end{equation}

These two determining equations have a geometrical formulation 
given by a Lie derivative defined in terms of a flow arising from the evolution system,
similar to the situation for ODEs \cite{Sar}.
Specifically, observe that $D_t u^\alpha |_\Esp = g^\alpha$,
and hence $D_t f|_\Esp = f_t + f'(g)$ for any differential function $f$. 
This motivates introducing the flow vector field 
\begin{equation}\label{flow.vector}
\Y = \partial_t + g^\alpha\partial_{u^\alpha}
\end{equation}
which is related to the total time derivative by prolongation, 
\begin{equation}\label{prolong.flow}
\pr\Y = D_t|_\Esp = \partial_t + (D_I g^\alpha)\partial_{u^\alpha_I} . 
\end{equation}

Associated to this flow vector field is the Lie derivative
\begin{equation}\label{lieder.flow}
\lieder{t}:= \lieder{\pr\Y}
\end{equation}
which acts on differential functions by $\lieder{t} f = \pr\Y(f) = D_t f|_\Esp$. 
On evolutionary vector fields \eqref{P.vector.evol}, 
this Lie derivative acts in the standard way as a commutator 
\begin{equation}\label{lieder.P.vector}
\begin{aligned}
\lieder{t} \pr \X_P 
& = \pr( (\pr\Y(P) - \pr\X_P(g))^\alpha \partial_{u^\alpha} )\\
& = \pr( (P_t + P'(g) - g'(P))^\alpha \partial_{u^\alpha} )\\
&= \pr( (P_t + [g,P])^\alpha \partial_{u^\alpha} ) . 
\end{aligned}
\end{equation}
Thus, the symmetry determining equation \eqref{symm.deteqn.evol.offsolnsp}
can be formulated as the vanishing of the Lie derivative expression \eqref{lieder.P.vector}. 
This establishes the following well-known geometrical result. 

\begin{proposition}\label{prop:symm.evol}
A symmetry of an evolution system \eqref{pde.evol}
is an evolutionary vector field \eqref{P.vector.evol} that is invariant under the associated flow \eqref{lieder.flow}. 
\end{proposition}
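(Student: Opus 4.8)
The plan is to read the equivalence directly off the identity \eqref{lieder.P.vector}, which has already rewritten the Lie derivative of the prolonged symmetry field along the flow as the prolongation of a single evolutionary vector field. First I would observe that, by definition, the evolutionary vector field \eqref{P.vector.evol} is invariant under the associated flow \eqref{lieder.flow} precisely when $\lieder{t}\pr\X_P = 0$. Substituting \eqref{lieder.P.vector}, this invariance condition reads $\pr\big((P_t + [g,P])^\alpha \partial_{u^\alpha}\big) = 0$.

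Next I would invoke the injectivity of prolongation on evolutionary vector fields: since the characteristic $P^\alpha$ appears undifferentiated as the coefficient of $\partial_{u^\alpha}$ in $\pr(P^\alpha\partial_{u^\alpha})$, a prolonged evolutionary vector field vanishes if and only if its characteristic vanishes. Applying this to the field with characteristic $(P_t + [g,P])^\alpha$, the invariance condition $\lieder{t}\pr\X_P = 0$ holds if and only if $(P_t + [g,P])^\alpha = 0$, which is exactly the symmetry determining equation \eqref{symm.deteqn.evol.offsolnsp}. This delivers both directions of the equivalence at once: a symmetry produces a flow-invariant evolutionary vector field, and conversely any flow-invariant evolutionary vector field has a characteristic solving the symmetry condition.

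The entire substance sits in the identity \eqref{lieder.P.vector}, which is the only step I would treat as requiring care, and it is already established in the discussion preceding the proposition. It rests on two standard facts: that the Lie derivative of one evolutionary vector field along another reduces to the commutator of characteristics, here $\pr\Y(P) - \pr\X_P(g) = P_t + P'(g) - g'(P) = (P_t + [g,P])$, and that prolongation intertwines the jet-space Lie bracket with this commutator. Granting these, the remaining argument is purely a translation between being invariant under the flow and satisfying the determining equation, so I anticipate no genuine obstacle beyond \eqref{lieder.P.vector} itself.
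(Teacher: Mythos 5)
Your proposal is correct and follows the same route as the paper: the identity \eqref{lieder.P.vector} rewrites $\lieder{t}\pr\X_P$ as the prolongation of the evolutionary vector field with characteristic $(P_t+[g,P])^\alpha$, and since a prolonged evolutionary vector field vanishes iff its characteristic does, invariance under the flow is equivalent to the determining equation \eqref{symm.deteqn.evol.offsolnsp}. Your explicit remark on the injectivity of prolongation is a small clarification the paper leaves implicit, but the substance is identical.
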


In particular, the resulting Lie-derivative vector field 
\begin{equation}\label{vector.lieder}
\lieder{t} \X_P = (P_t + [g,P])^\alpha \partial_{u^\alpha}
\end{equation}
vanishes iff the functions $P_\alpha$ are the components of a symmetry. 

A similar characterization will now be given for adjoint-symmetries, 
based on viewing the adjoint relation 
between the determining equations \eqref{symm.deteqn.evol.offsolnsp} and \eqref{adjsymm.deteqn.evol.offsolnsp}
as a duality relation between vectors and 1-forms. 

Introduce the evolutionary 1-form 
\begin{equation}\label{Q.1form.evol}
\evw_Q = Q_\alpha(t,x,\partial_x u,\ldots,\partial_x^l u) \d u^\alpha . 
\end{equation}
Its Lie derivative is given by 
\begin{equation}\label{lieder.Q.1form}
\begin{aligned}
\lieder{t} \evw_Q 
& = (\lieder{t} Q_\alpha) \d u^\alpha + Q_\alpha\lieder{t}(\d u^\alpha)\\
& = (Q_t +Q'(g))_\alpha \d u^\alpha + Q_\alpha \d(\lieder{t} u^\alpha)\\
& = (Q_t +Q'(g))_\alpha \d u^\alpha + Q_\alpha \d g^\alpha \\
& = (Q_t +Q'(g) + g'{}^*(Q))_\alpha \d u^\alpha  \text{ (mod total $D$)} . 
\end{aligned}
\end{equation}
This shows that the adjoint-symmetry determining equation \eqref{adjsymm.deteqn.evol.offsolnsp}
can be formulated as the functional vanishing of the Lie derivative expression \eqref{lieder.Q.1form}. 

\begin{theorem}\label{thm:adjsymm.evol}
An adjoint-symmetry of an evolution system \eqref{pde.evol}
is an evolutionary 1-form \eqref{Q.1form.evol} that is functionally invariant under the associated flow \eqref{lieder.flow}. 
\end{theorem}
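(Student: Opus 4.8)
The plan is to observe that the essential computation is already carried out in the Lie-derivative formula \eqref{lieder.Q.1form}, and to supply the bridge between the phrase \emph{functionally invariant under the flow} and the adjoint-symmetry determining equation \eqref{adjsymm.deteqn.evol.offsolnsp}. First I would fix the meaning of functional invariance in the sense of Section~\ref{sec:prelims}: the evolutionary 1-form $\evw_Q$ is functionally invariant under the flow $\lieder{t}$ exactly when $\lieder{t}\evw_Q$ is functionally equivalent to zero, i.e. its functional pairing \eqref{pairing} with every evolutionary vector field vanishes, equivalently $\lieder{t}\evw_Q = 0$ mod total $D$.

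Next I would invoke the chain of equalities in \eqref{lieder.Q.1form}, which brings $\lieder{t}\evw_Q$ into the reduced form $(Q_t + Q'(g) + g'{}^*(Q))_\alpha\, \d u^\alpha$ mod total $D$. The only non-routine input there is the integration-by-parts identity $Q_\alpha\, \d g^\alpha = g'{}^*(Q)_\alpha\, \d u^\alpha$ mod total $D$; this follows from $\d g^\alpha = g'(\d u)$ (valid because $g^\alpha$ depends on $t,x$ only through $u$ and its spatial derivatives, and $\d x = 0$ in the evolutionary calculus) together with the adjoint-Frechet relation, which transfers $Q_\alpha$ across the operator $g'$ at the cost of a total spatial divergence.

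Then I would show that a reduced 1-form $F_\alpha\, \d u^\alpha$ functionally vanishes iff $F_\alpha = 0$: pairing it via \eqref{pairing} against an arbitrary evolutionary vector field $\X_R = R^\alpha\partial_{u^\alpha}$ yields $\int R^\alpha F_\alpha\, dx$, which vanishes for all $R^\alpha$ precisely when $F_\alpha = 0$. Taking $F_\alpha = (Q_t + Q'(g) + g'{}^*(Q))_\alpha$ then identifies functional invariance of $\evw_Q$ with the determining equation \eqref{adjsymm.deteqn.evol.offsolnsp}; since that off-$\Esp$ equation is equivalent to the on-$\Esp$ form \eqref{adjsymm.deteqn.evol}, the claim follows.

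The step I expect to need the most care is the passage to a unique reduced representative: one must check that the ``mod total $D$'' ambiguity in $\lieder{t}\evw_Q$ does not obstruct reading off the coefficient $F_\alpha$, i.e. that the functional 1-form has a well-defined reduced form with order-zero coefficient in $\d u^\alpha$. This is precisely the notion of functional equivalence of 1-forms recorded in Section~\ref{sec:prelims}, so rather than re-derive it I would simply appeal to that framework to legitimize comparing coefficients.
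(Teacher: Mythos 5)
Your proposal is correct and follows essentially the same route as the paper: the proof there is precisely the computation \eqref{lieder.Q.1form}, reducing $\lieder{t}\evw_Q$ by the integration-by-parts step $Q_\alpha\,\d g^\alpha = g'{}^*(Q)_\alpha\,\d u^\alpha$ mod total $D$ and reading off the coefficient as the left side of \eqref{adjsymm.deteqn.evol.offsolnsp}. Your added remarks on the well-definedness of the reduced representative and on pairing against arbitrary vector fields merely make explicit what the paper leaves implicit in its notion of functional equivalence.
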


In particular, the resulting Lie-derivative 1-form 
\begin{equation}\label{1form.lieder}
\lieder{t} \evw_Q  = (Q_t +Q'(g) + g'{}^*(Q))_\alpha \d u^\alpha  \text{ (mod total $D$)}
\end{equation}
functionally vanishes iff the functions $Q_\alpha$ are the components of an adjoint-symmetry. 
This 1-form \eqref{1form.lieder} is functionally equivalent to the adjoint-symmetry 1-form \eqref{Q.1form}
introduced for a general PDE system. 
To see the relationship in detail, observe that 
\begin{equation}\label{full1form.evol1form.rel}
\begin{aligned}
\w_Q = Q_\alpha \d G^\alpha 
& = Q_\alpha \d(u_t^\alpha -g^\alpha) \\
& = Q_\alpha ( D_t(\d u^\alpha) -g'(\d u)^\alpha ) \\
& = - (D_t Q_\alpha + g'{}^*(Q)_\alpha) \d u^\alpha \text{ (mod total $D$)} \\
& = -\lieder{t} \evw_Q \text{ (mod total $D$)} . 
\end{aligned}
\end{equation}

An interesting question is how to extend this relationship to more general PDE systems.

\subsection{Evolution equations with spatial constraints}

A wide generalization of evolution systems occurring in applied mathematics and mathematical physics 
is given by systems comprised of evolution equations with spatial constraints. 
Some notable examples are
Maxwell's equations,
incompressible fluid equations, 
magnetohydrodynamical equations, 
and Einstein's equations. 

The constraints in such systems in general consist of spatial equations 
\begin{equation}\label{pde.constr}
C^\Upsilon(x,u,\partial_x u,\ldots,\partial_x^{N'} u) =0,
\quad
\Upsilon = 1,\ldots,M'
\end{equation}
that are compatible with the evolution equations \eqref{pde.evol}. 
Compatibility means that the time derivative of the constraints vanishes 
on the solution space $\Esp$ of the whole system, 
$(D_t C^\Upsilon)|_\Esp = 0$. 
For systems that are regular \cite{Anc-review},
Hadamard's lemma implies that the system obeys a differential identity 
\begin{equation}\label{C.G.diffid}
D_t C^\Upsilon = C'(G)^\Upsilon + \Dop(C)^\Upsilon
\end{equation}
where $G^\alpha = u_t^\alpha -g^\alpha$ denotes the evolution equations \eqref{pde.evol},
and where $\Dop$ is a linear differential spatial operator 
whose coefficients are non-singular on $\Esp$. 
Equivalently, the constraints must obey the identity 
$C'(g)^\Upsilon = \Dop(C)^\Upsilon$. 
A comparison of the differential order of each side of this identity shows that 
$\Dop$ is of same order $N$ as the evolution equations, namely
\begin{equation}\label{Dop}
\Dop= \sum_{0\leq|I|\leq N} R^I{}_{\Lambda}^{\Upsilon} D_I . 
\end{equation}  

The full system consists of $n+M'$ equations $G^\alpha =0$, $C^\Upsilon=0$.
Note that, in the previous notation \eqref{pde},
$(G^\alpha,C^\Upsilon)=(G^A)$ with $A=(\alpha,\Upsilon)$.

The symmetry determining equation is given by 
the linearization of the full system on $\Esp$, 
which is comprised by the evolution part \eqref{symm.deteqn.evol} 
and the constraint part 
\begin{equation}\label{symm.deteqn.constr}
(\pr\X_P C^\Upsilon)|_\Esp = C'(P)^\Upsilon|_\Esp =0 . 
\end{equation}
Off of $\Esp$, 
$C'(P)^\Upsilon = R_C(C)^\Upsilon$,
where $R_C$ is a linear differential spatial operator 
whose coefficients are non-singular on $\Esp$. 
Hence, 
the determining equations \eqref{symm.deteqn.evol} and \eqref{symm.deteqn.constr} 
can be stated as 
\begin{equation}\label{symm.deteqn.evolconstr.offsolnsp}
(P_t + [g,P])^\alpha|_{\Esp_C} =0,
\quad
C'(P)^\Upsilon|_{\Esp_C} =0 
\end{equation}
where $\Esp_C$ denotes the solution space of the spatial constraint equations \eqref{pde.constr}. 

The adjoint-symmetry determining equation is given by 
the adjoint linearization of the full system on $\Esp$, 
which comprises evolution terms and additional constraint terms:
\begin{equation}\label{adjsymm.deteqn.evolconstr}
(-D_t Q - g'{}^*(Q) + C'{}^*(q) )_\alpha|_\Esp =0 . 
\end{equation}
Here the components of an adjoint-symmetry consist of 
\begin{equation}
(Q_\alpha(t,x,\partial_x u,\ldots,\partial_x^l u),q_\Upsilon(t,x,\partial_x u,\ldots,\partial_x^{l'} u))
\end{equation}
with $Q_\alpha$ being associated to the evolution equations as before,
while $q_\Upsilon$ is associated to the constraint equations. 
Similarly to the symmetry case, the determining equation can be stated as 
\begin{equation}\label{adjsymm.deteqn.evolconstr.offsolnsp}
(Q_t + Q'(g) +g'{}^*(Q) - C'{}^*(q) )_\alpha|_{\Esp_C} =0 . 
\end{equation}

These determining equations for symmetries and adjoint-symmetries 
have a geometrical formulation in terms of a constrained flow \eqref{flow.vector}, 
generalizing the previous formulation for evolution systems as follows. 

\begin{theorem}\label{thm:symm.evolconstr}
A symmetry of a constrained evolution system \eqref{pde.evol} and \eqref{pde.constr}
is an evolutionary vector field \eqref{P.vector.evol} 
that is invariant under the associated constrained flow \eqref{lieder.flow} 
and that preserves the constraints. 
\end{theorem}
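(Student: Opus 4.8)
The plan is to obtain the theorem as a direct geometric reading of the two determining equations \eqref{symm.deteqn.evolconstr.offsolnsp}: the evolution part will become invariance of $\X_P$ under the constrained flow, while the constraint part will become tangency of $\X_P$ to the constraint surface $\Esp_C$. Before interpreting either condition, I would first record that the compatibility identity \eqref{C.G.diffid} is what makes the constrained flow well defined. Indeed, since the constraints \eqref{pde.constr} carry no explicit $t$-dependence, $\pr\Y(C^\Upsilon) = C'(g)^\Upsilon = \Dop(C)^\Upsilon$, and because $\Dop$ is a purely spatial operator while every spatial derivative of $C^\Upsilon$ vanishes on $\Esp_C$, one has $\pr\Y(C^\Upsilon)|_{\Esp_C} = 0$. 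Thus the flow \eqref{flow.vector} is tangent to $\Esp_C$ and restricts to a genuine flow on the constraint surface, which is what is meant by the \emph{constrained flow}.

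For the evolution part, the essential point is that the computation \eqref{lieder.P.vector} establishing $\lieder{t}\X_P = (P_t + [g,P])^\alpha\partial_{u^\alpha}$ is purely formal: it uses only the definition \eqref{lieder.flow} of $\lieder{t}$ together with the commutator, and never invokes the evolution equations or the constraints. Hence formula \eqref{vector.lieder} survives verbatim, and the first equation in \eqref{symm.deteqn.evolconstr.offsolnsp} reads exactly $\lieder{t}\X_P|_{\Esp_C} = 0$. So $\X_P$ satisfies the evolution determining equation iff it is invariant under the constrained flow, in complete parallel with Proposition~\ref{prop:symm.evol} but with $\Esp$ replaced by $\Esp_C$.

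For the constraint part, I would invoke the hooking identity \eqref{Frechet.hook.rel}, which gives $C'(P)^\Upsilon = \pr\X_P\hook\d C^\Upsilon$. Since $\d C^\Upsilon$ is the normal 1-form to the surfaces $C^\Upsilon=0$ (as in \eqref{nor.1form}), the vanishing of $C'(P)^\Upsilon$ on $\Esp_C$ states that $\pr\X_P$ is annihilated by these normal 1-forms, i.e. that $\pr\X_P$ is tangent to $\Esp_C$. This is precisely the content of $\X_P$ preserving the constraints. Combining the two readings, the pair \eqref{symm.deteqn.evolconstr.offsolnsp} is equivalent to: $\X_P$ is invariant under the constrained flow and tangent to the constraint surface. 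The converse is immediate, since each geometric condition reproduces one of the two determining equations, so no separate argument is needed.

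The step I expect to require the most care is the first one, namely giving a clean meaning to the constrained flow and to the restriction of $\lieder{t}$ to $\Esp_C$. This is the only place where the structure of a constrained system, rather than a bare evolution system, genuinely enters, and it rests entirely on the compatibility identity \eqref{C.G.diffid}. Everything after that is a transcription of the unconstrained arguments of Proposition~\ref{prop:symm.evol} and of the normal-1-form discussion in Section~\ref{sec:geomform}, now carried out on $\Esp_C$ in place of $\Esp$.
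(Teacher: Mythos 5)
Your proof is correct and follows essentially the same route as the paper, whose proof consists of exactly the two observations you make: the constraint part of \eqref{symm.deteqn.evolconstr.offsolnsp} expresses preservation of the constraints, and the evolution part expresses the vanishing of $\lieder{t}\X_P$ on $\Esp_C$. Your preliminary verification that the flow $\Y$ is itself tangent to $\Esp_C$ via the compatibility identity \eqref{C.G.diffid} is a detail the paper leaves implicit rather than a different argument, and it is a sound and worthwhile addition.
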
 

The proof of this result is simply the observation that, 
first, the determining equation \eqref{symm.deteqn.constr} 
corresponds to the constraints being preserved,
and second, the Lie derivative of the symmetry vector field \eqref{vector.lieder} 
along the flow vanishes on the constraint solution space. 

\begin{theorem}\label{thm:adjsymm.evolconstr}
An adjoint-symmetry of a constrained evolution system \eqref{pde.evol} and \eqref{pde.constr}
is an evolutionary 1-form \eqref{Q.1form.evol} that is functionally invariant 
under the associated constrained flow \eqref{lieder.flow}, 
up to a functional multiple of the normal 1-form $d C^\Upsilon$ 
arising from the constraints.
\end{theorem}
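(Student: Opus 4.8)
The plan is to mirror the proof of Theorem~\ref{thm:adjsymm.evol} for the unconstrained case and then track the single additional term contributed by the constraints. First I would compute the Lie derivative of the evolutionary 1-form $\evw_Q = Q_\alpha\,\d u^\alpha$ along the flow \eqref{lieder.flow} exactly as in the derivation \eqref{lieder.Q.1form}. Since the flow vector field $\Y=\partial_t+g^\alpha\partial_{u^\alpha}$ is built only from the evolution part $u^\alpha_t=g^\alpha$, and since that derivation uses only $\lieder{t}u^\alpha=g^\alpha$ together with $\d g^\alpha=g'(\d u)^\alpha$ and one integration by parts, the constraint equations \eqref{pde.constr} play no role in this step. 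It yields, as a functional identity,
\[
\lieder{t}\evw_Q = (Q_t+Q'(g)+g'{}^*(Q))_\alpha\,\d u^\alpha \quad\text{(mod total $D$)} .
\]

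Next I would invoke the adjoint-symmetry determining equation \eqref{adjsymm.deteqn.evolconstr.offsolnsp}, which rearranges to state that on the constraint solution space $\Esp_C$,
\[
(Q_t+Q'(g)+g'{}^*(Q))_\alpha\big|_{\Esp_C} = C'{}^*(q)_\alpha\big|_{\Esp_C} .
\]
Thus the Lie-derivative 1-form reduces on $\Esp_C$ to $C'{}^*(q)_\alpha\,\d u^\alpha$ (mod total $D$). The remaining step is to identify this as a functional multiple of the normal 1-form. Because the evolutionary differential satisfies $\d x^i=0$, the normal 1-form of the constraint surfaces is $\d C^\Upsilon=(C^\Upsilon)_{u^\alpha_I}\,\d u^\alpha_I=C'(\d u)^\Upsilon$, and a single integration by parts gives $q_\Upsilon\,\d C^\Upsilon=q_\Upsilon C'(\d u)^\Upsilon=C'{}^*(q)_\alpha\,\d u^\alpha$ (mod total $D$). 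Combining the three steps then yields
\[
\lieder{t}\evw_Q\big|_{\Esp_C} = q_\Upsilon\,\d C^\Upsilon\big|_{\Esp_C} \quad\text{(mod total $D$)} ,
\]
which is exactly the assertion that $\evw_Q$ is functionally invariant under the constrained flow up to the functional multiple $q_\Upsilon$ of the normal 1-form.

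I do not expect a deep obstacle: the result follows by combining the unconstrained Lie-derivative computation with the determining equation and one integration by parts. The main point requiring care is the bookkeeping --- keeping the functional (``mod total $D$'') equivalence consistent across the three steps, and being explicit that the correct evaluation domain is the constraint solution space $\Esp_C$ rather than the full solution space $\Esp$, since the constrained flow lives on $\Esp_C$. I would also make a point of confirming that the multiplier arising in the computation is precisely $q_\Upsilon$, the constraint-associated component of the adjoint-symmetry pair $(Q_\alpha,q_\Upsilon)$, so that the ``functional multiple'' in the statement is canonically identified rather than merely asserted to exist.
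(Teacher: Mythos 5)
Your proposal is correct and follows essentially the same route as the paper: the paper's proof likewise invokes the unconstrained Lie-derivative computation \eqref{1form.lieder}, substitutes the determining equation \eqref{adjsymm.deteqn.evolconstr.offsolnsp} on $\Esp_C$, and identifies $C'{}^*(q)_\alpha\,\d u^\alpha = q_\Upsilon\,\d C^\Upsilon$ (mod total $D$) as the normal 1-form multiple. Your version simply spells out the bookkeeping that the paper leaves implicit.
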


The proof is given by the earlier computation \eqref{1form.lieder} 
for the Lie derivative of the adjoint-symmetry 1-form. 
This computation shows that the adjoint-symmetry determining equation \eqref{adjsymm.deteqn.evolconstr.offsolnsp}
now can be expressed as 
\begin{equation}\label{1form.lieder.constr} 
\lieder{t} \evw_Q|_{\Esp_C}  
= (C'{}^*(q)_\alpha \d u^\alpha)|_{\Esp_C} 
= (q_\Upsilon \d C^\Upsilon)|_{\Esp_C} \text{ (mod total $D$)}
\end{equation}
where $\d C^\Upsilon$ is the normal 1-form given by the constraints viewed as surfaces in jet space. 

The Lie-derivative 1-form \eqref{1form.lieder.constr} 
is functionally equivalent to the adjoint-symmetry 1-form \eqref{Q.1form}
introduced for a general PDE system. 
In the present notation, 
the full system of evolution and constraint equations \eqref{pde.evol} and \eqref{pde.constr}
consists of $(G^\alpha,C^\Upsilon)=0$,
and the corresponding 1-form associated to this system is given by 
$\w_{Q,q} = Q_\alpha \d G^\alpha + q_\Upsilon \d C^\Upsilon$. 
Now observe that 
\begin{equation}
\w_{Q,q} = q_\Upsilon \d C^\Upsilon -\lieder{t} \evw_Q \text{ (mod total $D$)}
\end{equation}
using the relation \eqref{full1form.evol1form.rel}.

There is a class of adjoint-symmetries arising from
the summed product of arbitrary functions $\chi_\Upsilon(t,x)$
and the components of the the differential identity \eqref{C.G.diffid}.
This yields, after integration by parts, 
\begin{equation}
\begin{aligned}
0 & = \chi_\Upsilon (D_t C^\Upsilon - C'(G)^\Upsilon - \Dop(C)^\Upsilon)\\
& = D_t (\chi_\Upsilon C^\Upsilon) + D_i \Psi^i(\chi,G;C) - D_i\Phi^i(\chi,C;R)
- (D_t\chi +\Dop^*(\chi))_\Upsilon C^\Upsilon - C'{}^*(\chi)_\alpha G^\alpha
\end{aligned}
\end{equation}
where
$\Phi^i(\chi,C;R) =\sum_{0\leq|I|\leq N-1} (-1)^{|J|} D_J(\chi_\Upsilon R^{iI}{}_{\Lambda}^{\Upsilon}) D_{I/J} C^\Lambda$
from expression \eqref{Dop}. 
Hence,
\begin{equation}
D_t (\chi_\Upsilon C^\Upsilon) + D_i (\Psi^i(\chi,G;C) - \Phi^i(\chi,C;R))
= C'{}^*(\chi)_\alpha G^\alpha + (D_t\chi+\Dop^*(\chi))_\Upsilon C^\Upsilon
\end{equation}
has the form of a conservation law off $\Esp$,
with $(C'{}^*(\chi)_\alpha,(D_t\chi +\Dop^*(\chi))_\Upsilon)$ being the multiplier.
As is well known,
every multiplier for a regular PDE system is an adjoint-symmetry
\cite{Vin1984,Olv-book,BCA-book,Anc-review}.
This can be proven here by applying the Euler operator $E_{u^\alpha}$
and using its product rule.
Consequently, 
\begin{equation}\label{gauge.adjsymm.evolconstr}
Q_\alpha = C'{}^*(\chi)_\alpha,
\quad
q_\Upsilon = (D_t\chi+\Dop^*(\chi))_\Upsilon
\end{equation}
are components of an adjoint-symmetry,
involving the arbitrary functions $\chi_\Upsilon(t,x)$.
Such adjoint-symmetries are a counterpart of gauge symmetries,
and accordingly are called \emph{gauge adjoint-symmetries} \cite{Anc-review}.

The corresponding gauge adjoint-symmetry 1-form is given by
\begin{equation}
\evw_\chi = C'{}^*(\chi)_\alpha \d u^\alpha = \chi_\Upsilon \d C^\Upsilon \text{ (mod total $D$)}
\end{equation}
and satisfies the geometrical relation
\begin{equation}
\lieder{t} \evw_\chi|_{\Esp_C}  
= ( (D_t\chi+\Dop^*(\chi))_\Upsilon \d C^\Upsilon)|_{\Esp_C} \text{ (mod total $D$)} . 
\end{equation}
This establishes the following geometrical result. 

\begin{theorem}\label{thm:gauge.adjsymm}
A gauge adjoint-symmetry \eqref{gauge.adjsymm.evolconstr}
is functionally equivalent to a normal 1-form $\evw_\chi$ associated to the constraint equations \eqref{pde.constr}.
Under the evolution flow, it is mapped into another normal 1-form.
\end{theorem}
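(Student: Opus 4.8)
The plan is to establish the two assertions of the theorem separately, both by integration by parts together with the compatibility identity \eqref{C.G.diffid}. The first assertion identifies the gauge adjoint-symmetry 1-form with a normal 1-form built from the constraints; the second tracks how this normal 1-form transforms under the flow Lie derivative $\lieder{t}$.

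For the first assertion, I would start from the gauge adjoint-symmetry 1-form $\evw_\chi = Q_\alpha \d u^\alpha$ with $Q_\alpha = C'{}^*(\chi)_\alpha$ given by \eqref{gauge.adjsymm.evolconstr}. Using the adjoint relation between the Frechet derivative and its adjoint (just as it was applied to the general 1-form $\w_Q$ in Section~\ref{sec:geomform}), I would move $C'{}^*$ across the functional pairing by integration by parts, obtaining $C'{}^*(\chi)_\alpha \d u^\alpha = \chi_\Upsilon C'(\d u)^\Upsilon = \chi_\Upsilon \d C^\Upsilon$ mod total $D$, where $\d C^\Upsilon = C'(\d u)^\Upsilon$ is precisely the constraint normal 1-form. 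This exhibits $\evw_\chi$ as the normal 1-form $\chi_\Upsilon \d C^\Upsilon$ up to total $D$, which is the claimed functional equivalence.

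For the second assertion, I would compute $\lieder{t}\evw_\chi$ using the derivation property of the Lie derivative and its commutation with $\d$, exactly as in \eqref{lieder.Q.1form}. Two inputs are decisive. First, since $\chi_\Upsilon(t,x)$ carries no $u$-dependence, $\lieder{t}\chi_\Upsilon = D_t\chi_\Upsilon$. Second, $\lieder{t}(\d C^\Upsilon) = \d(\lieder{t} C^\Upsilon) = \d(C'(g)^\Upsilon) = \d(\Dop(C)^\Upsilon)$, using the compatibility identity $C'(g)^\Upsilon = \Dop(C)^\Upsilon$ following from \eqref{C.G.diffid}. Collecting these gives $\lieder{t}(\chi_\Upsilon \d C^\Upsilon) = (D_t\chi_\Upsilon)\d C^\Upsilon + \chi_\Upsilon \d(\Dop(C)^\Upsilon)$ mod total $D$. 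Restricting to the constraint solution space $\Esp_C$, where all derivatives $D_I C^\Lambda$ vanish, I would argue that in $\d(\Dop(C)^\Upsilon)$ only the term in which $\Dop$ differentiates $\d C$ survives; integrating that term by parts converts $\Dop$ into its adjoint, yielding $\chi_\Upsilon \d(\Dop(C)^\Upsilon)|_{\Esp_C} = \Dop^*(\chi)_\Upsilon \d C^\Upsilon|_{\Esp_C}$ mod total $D$. Adding back the $D_t\chi$ contribution produces $(D_t\chi + \Dop^*(\chi))_\Upsilon \d C^\Upsilon = q_\Upsilon \d C^\Upsilon$, which is again a normal 1-form, completing the argument.

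The main obstacle I anticipate is the careful handling of $\chi_\Upsilon \d(\Dop(C)^\Upsilon)$ on $\Esp_C$. One must expand $\Dop(C)^\Upsilon = R^I{}_{\Lambda}^{\Upsilon} D_I C^\Lambda$ from \eqref{Dop} and observe that the terms in which $\d$ lands on the coefficients $R^I{}_{\Lambda}^{\Upsilon}$ are proportional to $D_I C^\Lambda$ and hence vanish on $\Esp_C$, leaving only $R^I{}_{\Lambda}^{\Upsilon} D_I(\d C^\Lambda) = \Dop(\d C)^\Upsilon$; the subsequent integration by parts then legitimately produces the adjoint $\Dop^*$. Keeping the successive ``mod total $D$'' equivalences consistent through these manipulations is the only delicate bookkeeping, and the identity \eqref{C.G.diffid} supplies exactly the structure needed to close the computation.
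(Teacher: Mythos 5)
Your proposal is correct, and its first half coincides with the paper's argument: the identification $\evw_\chi = C'{}^*(\chi)_\alpha \d u^\alpha = \chi_\Upsilon C'(\d u)^\Upsilon = \chi_\Upsilon \d C^\Upsilon$ (mod total $D$) is exactly how the paper establishes functional equivalence with the constraint normal 1-form. For the second half you take a genuinely different route. The paper first proves, via the multiplier identity built from \eqref{C.G.diffid} and the product rule for the Euler operator, that the pair $(Q_\alpha,q_\Upsilon)=(C'{}^*(\chi)_\alpha,(D_t\chi+\Dop^*(\chi))_\Upsilon)$ is an adjoint-symmetry, and then reads off $\lieder{t}\evw_\chi|_{\Esp_C}=(q_\Upsilon\d C^\Upsilon)|_{\Esp_C}$ (mod total $D$) as an instance of the general relation \eqref{1form.lieder.constr} from Theorem~\ref{thm:adjsymm.evolconstr}. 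You instead compute $\lieder{t}(\chi_\Upsilon\d C^\Upsilon)$ directly: using $\lieder{t}\chi_\Upsilon=D_t\chi_\Upsilon$, the commutation of $\lieder{t}$ with $\d$, the compatibility identity $C'(g)^\Upsilon=\Dop(C)^\Upsilon$, and the observation that on $\Esp_C$ only the $\Dop(\d C)^\Upsilon$ term survives in $\d(\Dop(C)^\Upsilon)$ before integrating by parts to produce $\Dop^*(\chi)_\Upsilon\d C^\Upsilon$. Your computation is self-contained (it bypasses the multiplier/conservation-law step entirely) and, as a byproduct, it re-derives the fact that \eqref{gauge.adjsymm.evolconstr} satisfies the adjoint-symmetry determining equation, since by \eqref{1form.lieder} the functional form of $\lieder{t}\evw_Q$ encodes that equation; what it costs is the extra bookkeeping you flag — checking that $\lieder{t}$ and restriction to $\Esp_C$ respect ``mod total $D$'' equivalence — which the paper's route absorbs into results already proved. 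Both arguments are valid and yield the same formula for the image normal 1-form.
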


The preceding developments for general systems of evolution equations with spatial constraints
have used the classical notion of symmetries and adjoint-symmetries. 
It would be interesting to extend the formulation and the results 
by considering a notion of conditional symmetries and corresponding conditional adjoint-symmetries
based on the spatial constraints. 

Specifically, on the solution space of the full system, 
consider a symmetry given by an evolutionary vector field \eqref{P.vector.evol} 
that satisfies 
\begin{equation}\label{symm.deteqn.evolconstr.conditional}
(P_t + [g,P])^\alpha|_{\Esp_C} =0
\end{equation}
where $\Esp_C$ denotes the solution space of the spatial constraint equations \eqref{pde.constr}. 
Such conditional symmetries \eqref{symm.deteqn.evolconstr.conditional} 
differ from classical symmetries \eqref{symm.deteqn.evolconstr.offsolnsp}
by relaxing the condition that the constraints are preserved. 
Their natural adjoint counterpart is given by an evolutionary 1-form \eqref{Q.1form.evol} satisfying
\begin{equation}\label{adjsymm.deteqn.evolconstr.conditional}
(Q_t + Q'(g) + g'{}^*(Q))_\alpha|_{\Esp_C} =0 . 
\end{equation}
which is the adjoint of the determining equation \eqref{symm.deteqn.evolconstr.conditional}. 
Such conditional adjoint-symmetries \eqref{adjsymm.deteqn.evolconstr.conditional}
differ from classical adjoint-symmetries \eqref{adjsymm.deteqn.evolconstr.offsolnsp}
by excluding the terms arising from the spatial constraints.

This notion of conditional symmetries and adjoint-symmetries is more general than
the classical notion because the conditional determining equations 
hold on $\Esp_C$ instead of the whole jet space.

\section{Concluding remarks}\label{sec:remarks}

The main results showing how adjoint-symmetries correspond to evolutionary 1-forms with certain geometrical properties
provides a first step towards giving a fully geometrical interpretation for adjoint-symmetries.
In particular, for systems of evolution equations, 
adjoint-symmetries can be geometrically described as 1-forms that are invariant under the flow generated by the system on the solution space.
This interesting result has a straightforward generalization to systems of evolution equations with spatial constraints.
Consequently, the results presented here are applicable to all PDE systems of
interest in applied mathematics and mathematical physics. 

One direction for future work will be to translate and generalize
these results into the abstract geometrical setting of secondary calculus
\cite{KraVin-book,Vin1998}
developed by Vinogradov and Krasil'shchik and their co-workers. 

It will also be interesting to develop fully the use of adjoint-symmetries 
in the study of specific PDE systems, as outlined in the introduction: 
finding exact solutions, 
detecting and finding mappings into a target class of PDEs,
and detecting integrability,
which are counterparts of some important uses of symmetries. 
Another use of adjoint-symmetries, which has been introduced very recently \cite{AncWan2020a}, 
is for finding pre-symplectic operators.

\end{document}